\setlist{nolistsep}
\DeclareSIUnit[scientific-notation=engineering,prefixes-as-symbols=false]{\ns}{\nano\second}
\newcommand{\para}[1]{\smallskip\noindent\textbf{#1}}
\newcommand{\cut}[1]{}
\newcommand\code[1]{\texttt{\lstinline[mathescape]$#1$}}
\newcounter{algsubstate}
\renewcommand{\thealgsubstate}{\alph{algsubstate}}
\newcounter{insightlabel}
\newcounter{insightnmbr}
\renewcommand{\theinsightlabel}{\textbf{\theinsightnmbr}}
\newcommand{\tightcaption}[1]{\vspace{-0.2cm}\caption{#1}\vspace{-0.2cm}}
\newtheorem{thm}{Theorem}
\newtheorem{lem}[thm]{Lemma}
\newtheorem{definition}[thm]{Definition}
\newcommand{\var}{\mathsf{Var}}
\def\_{\,\,\,\,\,}
\def\RTT{\mathsf{RTT}}
\def\TPL{\mathsf{TPL}}
\newcommand{\eps}{\epsilon}
 \def\Pr{{\mathrm{Pr}}}
\newcommand{\EEx}[1]{\ensuremath{\mathbb{E}\left[#1\right]}}
\renewcommand{\O}[1]{\ensuremath{\mathcal{O}\left(#1\right)}}
\DeclareMathOperator*{\argmin}{argmin}
\DeclareMathOperator*{\polylog}{polylog}
\newenvironment{packeditemize}{
\begin{list}{$\bullet$}{
\setlength{\itemsep}{1.5pt}
\setlength{\labelwidth}{8pt}
\setlength{\leftmargin}{10pt}
\setlength{\labelsep}{3pt}
\setlength{\listparindent}{\parindent}
\setlength{\parsep}{1.5pt}
\setlength{\parskip}{1.5pt}
\setlength{\topsep}{1.5pt}}}{\end{list}}
\begin{document}
\title{Memory-Efficient Performance Monitoring on Programmable Switches with Lean Algorithms}

\author{Zaoxing Liu$^1$, Samson Zhou$^1$, Ori Rottenstreich$^2$, Vladimir Braverman$^3$, Jennifer Rexford$^4$\\
\normalsize{$^1$Carnegie Mellon University, $^2$Technion, $^3$Johns Hopkins University, $^4$Princeton University}}

\date{}

\maketitle

\begin{abstract}
Network performance problems are notoriously difficult to diagnose. 
Prior profiling systems collect performance statistics by keeping information about each network flow, but maintaining per-flow state is not scalable on resource-constrained NIC and switch hardware.  
Instead, we propose sketch-based performance monitoring using memory that is sublinear in the number of flows. 
Existing sketches estimate flow monitoring metrics based on flow sizes. 
In contrast, performance monitoring typically requires combining information across pairs of packets, such as matching a data packet with its acknowledgment to compute a round-trip time. 
We define a new class of \emph{lean} algorithms that use memory sublinear in both the size of input data and the number of flows. 
We then introduce lean algorithms for a set of important statistics, such as identifying flows with high latency, loss, out-of-order, or retransmitted packets.
We implement prototypes of our lean algorithms on a commodity programmable switch using the P4 language. 
Our experiments show that lean algorithms detect $\sim$82\% of top 100 problematic flows among real-world packet traces using just 40KB memory.
\end{abstract}
\section{Introduction}
Modern datacenter operators require timely and accurate information about the performance of the underlying network to optimize their network services. 
High bandwidth usage and efficiency of the network infrastructure are essential for cloud provider's cost control and consumer satisfaction. 
However, due to temporary congestion, link failure, or adversarial traffic, the performance of the network can quickly degrade. 
Thus, cloud providers seek efficient ways to evaluate and maintain the quality of their network infrastructure. 

Traditional techniques for diagnosing performance problems rely on offline analysis of traffic traces~\cite{tcptrace,critical_tcp,Zhang02,Wu2013}. 
However, these offline solutions are not capable of real-time diagnosis and incur significant data-collection overhead. 
In this paper, we focus on \emph{online} performance analysis. 
When conducting diagnosis, an operator needs to track multiple statistics, which are important to detect performance degradation and pinpoint problematic issues. 
For instance, existing tools track TCP sending and receiving window sizes~\cite{snap,marple,dapper}, lost packets~\cite{marple,dapper}, round-trip latency~\cite{marple}, out-of-order packets~\cite{dapper}, and re-transmitted packets in order to understand the current network quality. 

Prior efforts on online performance analysis usually require end-host access.  For instance, SNAP~\cite{snap} and HONE~\cite{hone,perfCDN} modify end-host network stack to collect TCP statistics. 
These solutions offer good monitoring capabilities when full control of all the network components is granted. 
However, modifying the end-host's operating system is not ideal on the following two fronts: 
(1) In public clouds, operators cannot alter the end-host's network stack without undermining the isolation of tenant's virtual machines (VMs). 
(2) In other types of networks, operators do not even have access to end-user machines, e.g., in a carrier network. In both settings, measuring performance statistics \emph{inside} the network is a better approach. 

\begin{figure}[t]
\centering
\includegraphics[width=0.98\linewidth]{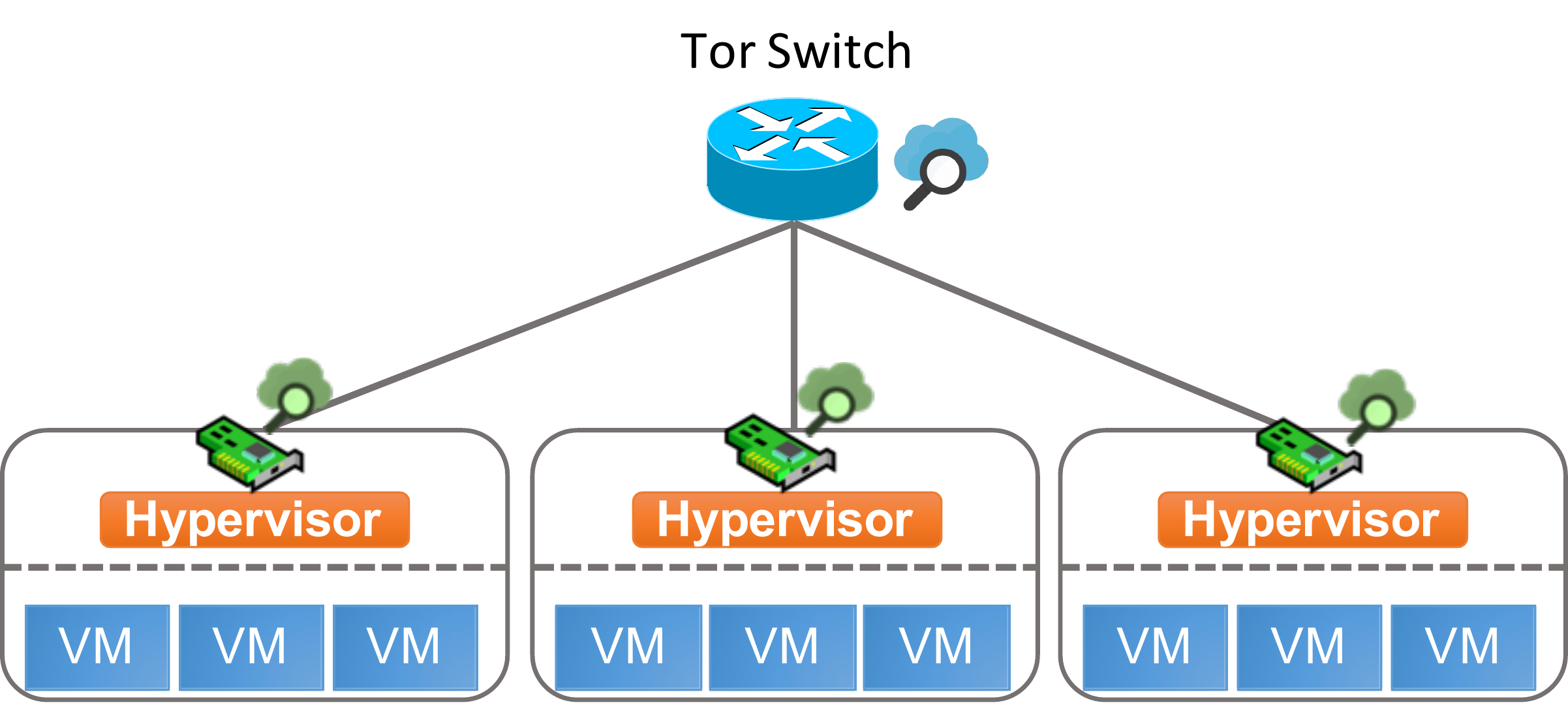}
\tightcaption{Performance monitoring deployed at the edge.}\label{fig:edge}
\vspace{-3mm}
\end{figure}

In this context, Marple~\cite{marple} is a language to query performance statistics from programmable switches and Dapper~\cite{dapper} is a recent tool for TCP performance diagnosis deployed at the ``edge''---hypervisor, NIC, or top-of-rack switch, adjacent to the end-host.
Unfortunately, network devices (e.g., NICs and switches, as shown in Figure~\ref{fig:edge}) have limited memory and computation power available for handling performance measurement tasks. 
Tools like Marple~\cite{marple} and Dapper~\cite{dapper} rely on per-flow data structures such as per-flow counters and recorded per-packet timestamps. 
These data-plane monitoring techniques become infeasible when the number of connections is large, particularly as increasing link speeds lead to ever more flows on each link.

To this end, we ask \emph{can we build a performance monitoring tool with high accuracy that is memory-efficient}? 
\begin{definition}
We define the performance-monitoring algorithms that use memory sublinear in both the size of input data and the number of flows\footnote{We follow common terminology, referring to a ``flow'' as the packets sharing a 5-tuple (SrcIP, DstIP, SrcPort, DstPort, Proto).
In practice, a flow can also be defined as an Origin-Destination pair or other combinations of packet header fields.} as lean algorithms. 
\end{definition}
In this work, we aim to design lean algorithms as a response to the question above. A natural observation is that we cannot achieve lean algorithms by simply tracking and maintaining accurate information for every flow. 
Instead, we attempt to measure the top-$k$ flows that contribute the most latency, packet loss, out-of-order packets, and retransmitted packets (i.e., the most ``influential'' flows). 
Top-$k$ influential flows are useful because, in the presence of network performance problems between end-hosts, these flows contribute significantly to the performance metric and are strong indicators reflecting the performance issues. For instance,  flows that have a number of packet losses are useful information for faulty link detection~\cite{netbouncer}.



Our first attempt to achieve lean algorithms is characterizing the 
performance-monitoring problems based on a redefined streaming model. 
In this model, we prove that if a performance-monitoring function meets the so-called \emph{flow-additive} property (in \S\ref{sec:model}) and can be approximated on an individual flow using space sublinear in the input data \emph{for that particular flow}, then we can provide a lean algorithm to detect the most influential flows defined by this function, using space sublinear in not only the total input size but also the number of flows in the network.  To the best of our knowledge, our work is the first to propose algorithms with such memory requirements.

Our design is inspired by the rich literature of \emph{sketching algorithms} in network flow monitoring, where the traffic is modeled as a stream of elements~\cite{ams}. 
A number of sketching algorithms have been introduced to accurately estimate various flow metrics such as heavy hitters~\cite{CMSketch,countsketch,SpaceSavings,HashPipe,HarrisonHH,OpenSketch,univmon}, hierarchical heavy hitters~\cite{MitzenmacherHHH,cormodeHHH,RHHH}, flow size distribution~\cite{CMSketch,Kumar:2004,FlowRadar}, and change detection~\cite{OpenSketch,univmon}. 
These algorithms allow for memory-efficient measurement systems while maintaining guaranteed fidelity. 
In a similar spirit, we leverage sketching techniques as a building block to achieve excellent memory efficiency for a new set of performance monitoring metrics. 
Unfortunately, we provably cannot provide a general solution for performance monitoring due to the different definitions between various performance metrics and the ways to measure them. For instance, latency is measured by the time of a packet sent and the time its ACK received, while packet loss is measured by the number of missing packets in the flow.
Specifically, we also demonstrate lower bounds that show lean algorithms for specific performance-monitoring problems cannot be determined without additional assumptions and problem relaxations. 

We empirically evaluate the accuracy and efficiency of our algorithms to detect the top 100 flows that contribute the largest portion of latency, packet loss, out-of-order-packets, and retransmitted packets. We implement a prototype in P4~\cite{P4} and evaluate on a 6.5Tbps Barefoot Tofino switch~\cite{tofino}. Our experiment results are based on analyzing a range of network traces with 2.3M to 3.7M flows and injected performance issues, and demonstrate good accuracy using small memory: $\sim$82\% accuracy with 40KB and $>$90\% accuracy with 160KB for all tested performance statistics. 

\para{Roadmap and contributions:} This paper is organized as follows:
\begin{itemize}
    \item We describe a set of typical performance statistics and their use cases in diagnosing performance issues. (\S\ref{sec:stat_bg})
    \item We show a viable path towards lean algorithms. We introduce a streaming computational model and show that lean algorithms can be achieved if the performance statistics satisfy the \emph{flow-additive} property. (\S\ref{sec:model})
    \item We propose and analyze four lean algorithms for tracking the flows that contribute most latency, packet loss, out-of-order packets, and retransmitted packets. (\S\ref{sec:algo})
    \item We implement a proof-of-concept prototype using P4 and optimize it for Barefoot Tofino hardware target. (\S\ref{sec:impl})
    \item Our trace-driven evaluation shows that our approach effectively detects the most influential flows among all flows with a tiny amount of memory. (\S\ref{sec:eval})
\end{itemize}

Finally, we discuss related work in \S\ref{sec:related} and highlight some future directions this work opens up in \S\ref{sec:conclude}.
\section{Network Performance Statistics}\label{sec:stat_bg}
Performance degradation in network connections can occur for many reasons. 
For instance, if a flow experiences high packet loss or latency, the congestion-control algorithm reduces the sending rate, leading to lower performance. 
To profile such performance issues, we need to measure performance statistics in a timely and efficient manner. 
In this section, we discuss representative statistics that are widely measured in diagnosing performance problems~\cite{snap,marple,dapper}. 
In the following statistics, we formally define a \emph{packet} as a tuple or a sub-tuple of $\{key, type, seq no., ack no., time\}$, where $key$ is the flow identity (e.g., 5 tuple), $type$ represents a specific packet type, $seq/ack no.$ is the sequence number or corresponding acknowledgment (ACK) number (we use relative numbers $1,2,3\dots$ for illustration in this paper), and $time$ is the timestamp.

\para{Round-trip latency:} 
The latency of a connection can often be measured by the difference between the packet transmission time and the receiving time of the corresponding response (i.e., round trip time (RTT)). 
Many network applications, such as online gaming or trading, demand fast responses to information about new events, and are therefore extremely sensitive to latency. 
Hence, minimizing network latency is expected of any adequate network management.

To this end, we identify the $k$ flows that contribute the highest round-trip times. 
The round-trip latency for a packet is measured as the time difference between a sent packet and its corresponding ACK; the total round-trip time of a flow is the sum of the latencies of the packets across the flow. 
Since the latency measurement relies on the timestamps of a pair of packets and a missing ACK may happen at any time in the network, we can specify a set of special packets to  measure the latency in order to minimize the probability of latency overestimation from missing ACKs.
For instance, a TCP SYN packet can be defined as \{key=(1.1.1.1, 2.2.2.2, 123, 80, TCP), type=SYN, seq=2, time=$t_1$\}, with its corresponding SYN-ACK packet defined as \{key=(2.2.2.2, 1.1.1.1, 80, 123, TCP), type=SYN-ACK, ack=3, time=$t_2$\}. 
Then the measured latency is $t_2-t_1$ units; when this measured latency increases, the actual throughput drops.

\textit{High network path latency to detect slow TCP rate:} The TCP sending rate depends on the window size and RTT, and the RTT measurement directly decides the TCP sending rate at the client-side. To diagnose the problem, when the RTT is higher than normal (e.g., an expected RTT from an operator or the minimum RTT among connections), we can decide if the latency on this connection is acceptable.

\para{Packet loss:} 
Flows with high packet loss can be used to identify the network routes with potential performance issues. 
For instance, if a recent window of packets received is \{key=(1.1.1.1, 2.2.2.2, 123, 80, TCP), seq=1\} and \{key=(1.1.1.1, 2.2.2.2, 123, 80, TCP), seq=3\}, the packet with $seq$=2 is lost. 
Our objective is to identify the flows with the largest fraction of missing packets, given a stream of packets on a network. When measuring the fraction of packets, we use the metric of packet byte count.

\textit{Packet drops to detect faulty links:} Faulty links can cause random packet drops along the network paths. When we detect the flows with high packet loss as in Section~\ref{sec:app_loss}, we can use the paths of these flows to identify potentially faulty links. For instance, when we detect both routed connections $A\to B\to C_1\to D$ and $A\to B\to C_2\to D$ have high packet loss, link $A\to B$ has a higher probability of being faulty. 

\para{Out-of-order packets:}
Given a stream of packets in one direction, the out-of-order packets are defined to be the packets whose sequence number are less than the current largest sequence number ($MaxSeq$), i.e., $seq < MaxSeq$, but arrives within a small period of time (e.g., 3ms) after the packet with $MaxSeq$ is received. 
Then the objective is to return the $k$ flows with the most out-of-order packets.

\textit{Out-of-order packets to infer incorrectly configured Quality of Service (QoS):} Although there are quite a few root causes of out-of-order packets, faulty QoS configurations, such as setting duplicate Access Control List (ACL) rules that misclassify the packets from the same application into a different application, 
can potentially delay some packets and fails to keep the packets in order. Such incorrect configuration can affect the performance of online applications. Thus, network flows with many out-of-order packets can be an indicator of present misconfigurations. As we describe in Section~\ref{sec:app_oop}, we detect flows with high out-of-order packets and use them for troubleshooting the network configurations.

\para{Retransmissions:} 
Given a stream of packets, the retransmitted packets are the packets whose sequence numbers appear at least twice in a flow. In TCP, retransmissions can happen under network congestion with high latency and packet loss. 
For instance, in TCP fast retransmission, duplicate ACKs are used as a part of packet recovery in order to remedy a packet loss or timeout. However, measuring only latency and packet loss may not be enough as retransmissions could happen due to other reasons, such as a buffer overflow in a video stream application in the user machine. Thus, measuring these retransmitted packets is useful in evaluating the performance of the prevailing connections.


In general, performance monitoring assists network diagnosis tools~\cite{Dhamdhere:2007,pingpoint} to further investigate in specific locations the root causes of network performance issues.

\section{Overview of Our Approach}
\label{sec:model}
In this section, we show a viable path towards lean algorithms for performance monitoring in the following steps:
\begin{itemize}
    \item[(a)] We describe a working model of computation for performance monitoring tasks.
    \item[(b)] We observe that a range of performance monitoring functions in the model share a similar property, which we characterize and call it flow-additive.
    \item[(c)] We show a characterization of flow-additive functions that can be estimated by lean algorithms. 
\end{itemize}
Following the above path, we then propose lean algorithms in the next section. We summarize our algorithms in Table~\ref{tab:summary_alg} and present the details in \S\ref{sec:algo}.

\para{Computation model and problem setting:} 
We model the network traffic as a massive data stream that is a union of flows. 
We measure some performance metrics defined over each flow and identify a subset of flows based on the statistics, e.g., flows with high latency and high packet loss. 
Prior work collects per-flow information to deterministically compute these performance metrics, which require $\O{N}$ memory for $N$ flows. 
This memory requirement limits the ability to deploy prior solutions on resource-constrained NIC and switch hardware as the number of flows can be large in practice (e.g., tens of millions in a public Internet trace~\cite{maccdc}).

Our goal is to design \emph{lean} performance-monitoring algorithms that use memory (and also implicitly processing time) sublinear in both the total size of the input traffic and the number of flows. 
Specifically, if $N$ is the number of distinct flows, then lean algorithms must use $o(N)$ memory. 
Thus, a natural question to ask is whether lean algorithms are even achievable for performance-monitoring metrics? 
We explore this question in the following subsection; we define the notions of heavy-hitters and flow-additive functions.

\para{$g$-heavy hitters and flow-additive functions:}
Suppose each network flow has been assigned an identity (e.g., 5-tuple), as previously discussed. 
Suppose $g(\cdot)$ is a function that takes arbitrary data as input and outputs a performance metric over the data. 
For instance, $g(\cdot)$ can be a function that counts packet loss so that $g(f_i)$ measures the packet loss in flow $f_i$. 
We also define $g(\cdot)$ over the union of all flows as $g\left(\cup f_i\right)$, where we use $\cup f_i$ to denote all packets that are sent across the network, i.e., $g\left(\cup f_i\right)$ is the aggregated packet loss among all flows. 
Then we would like to identify the most influential flows---the so-called $g-$heavy hitters that are ``heavy hitters'' among all flows computed by function $g$: 

\begin{definition}
For flows $f_1, f_2,\ldots, f_N$ and a function $g:f_i\to\mathbb{R}$, we say that a flow $f_i$ is $g$-heavy hitter if $g(f_i)$ equals at least some fraction $\eps$ of the $\ell_1$ or $\ell_2$ norm of the vector $(g(f_1),g(f_2),\ldots,g(f_N))$. 
Namely, the $\ell_1$ $g$-heavy hitters are those flows with value at least $\eps \cdot \sum_i |g(f_i)|$. Likewise, the $\ell_2$ $g$-heavy hitters are those with value at least $\eps \cdot \sqrt{\sum_i g(f_i)^2}$. 
\end{definition}

\begin{figure}[t]
\centering
\includegraphics[width=\linewidth]{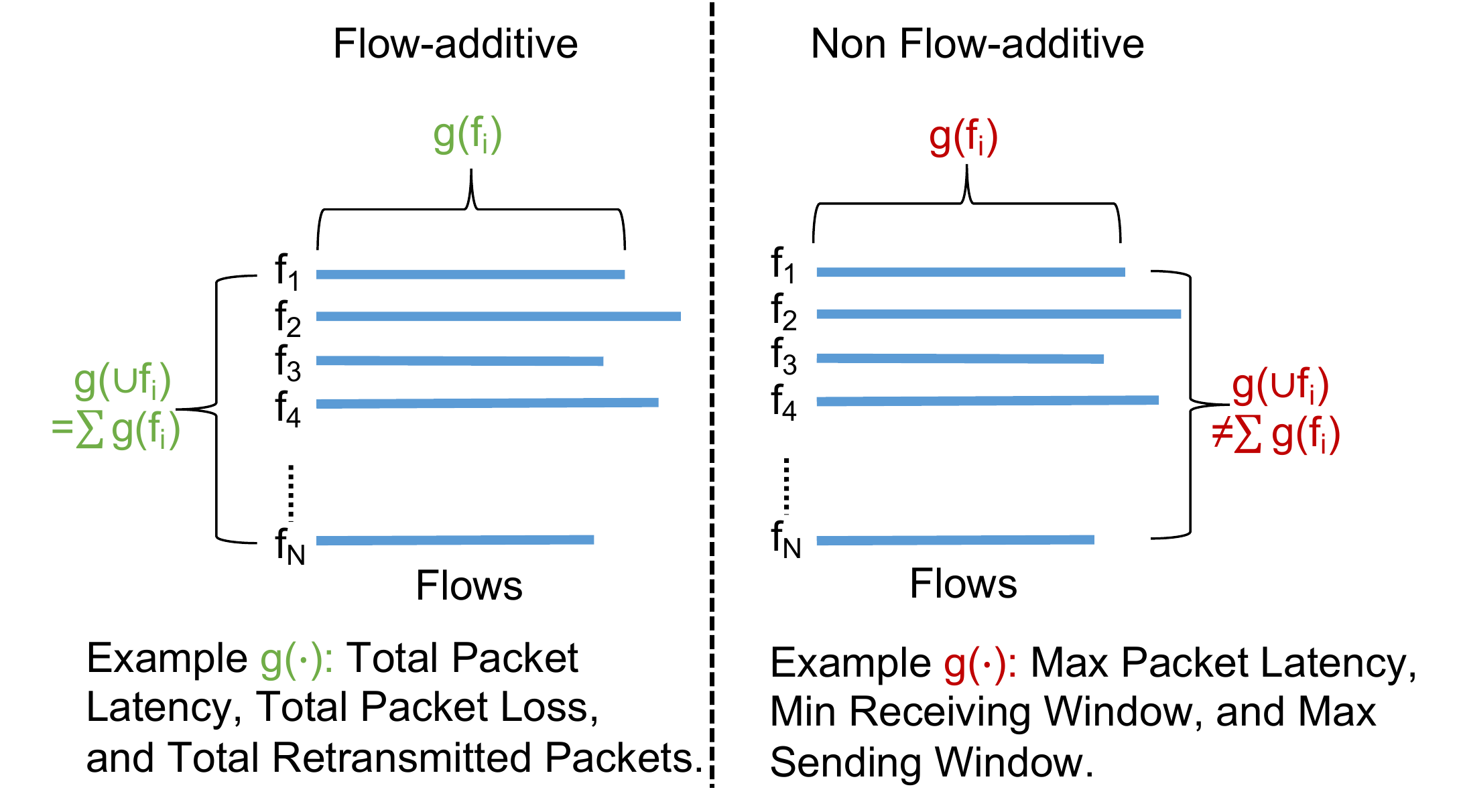}
\vspace{-3mm}
\tightcaption{Computation model for performance monitoring and flow-additive property.}\label{fig:lemma}
\end{figure}

\begin{table*}[t]
\centering
\footnotesize
\begin{tabular}{ llclc }
\toprule
\textbf{Perf. Stats $g(\cdot)$} & g-Heavy Hitters & Flow-additive / & Assumptions & Space /\\
 &  & Single Flow Sublinear &  & Time \\
\midrule
Latency & Flows with high latency & $\checkmark$ / $\checkmark$ & None & $\O{\frac{1}{\epsilon^2}\log n\log\frac{1}{\delta}}$\\
Loss & Flows with high packet loss & $\checkmark$ / X & Random packet loss & $\O{\frac{1}{\epsilon^2}\log n\log\frac{1}{\delta}}$\\
Out-of-order& Flows with high out-of-order packets &$\checkmark$ / X & Bounds packets received within 3ms & $\O{\frac{1}{\eps}\log n}$\\
Retransmittion & Flows with high packet retransmission & $\checkmark$ / X & Only on heavy flows & $\O{\frac{1}{\epsilon^2}\log n \log\frac{1}{\delta}}$\\
\bottomrule
\end{tabular}
\vspace{-1mm}
\caption{Summary of our lean algorithms for different performance statistics.}
\vspace{-3mm}
\label{tab:summary_alg}
\end{table*}

We first observe that a large number of performance monitoring functions, including total round trip time, out-of-order packets, packet retransmissions, and packet loss, obey a similar set of properties. 
Thus, it seems natural to investigate a unifying set of properties for which we might be able to characterize performance monitoring using lean algorithms. 
We formulate the desirable properties below and call functions that contain these properties \emph{flow-additive} functions.

Suppose that we have $N$ flows $f_1,f_2,\ldots,f_N$, as shown in Figure~\ref{fig:lemma}.
Each flow is a stream of packets associated with a packet payload, along with some other metadata, such as 5-tuple, packet type, sequence number, and timestamp. 
Given a function $g:f_i\to\mathbb{R}$, our goal is to approximate the $g$-heavy hitters\footnote{We give a proof to find the $\ell_2$ heavy-hitters, but observe that Theorem~\ref{thm:sublinear:exist} also reports the $\ell_1$ heavy-hitters and can further be modified to identify the $\ell_1$ heavy-hitters using space $\O{\frac{1}{\eps^2}s(g,N)\log N}$, such as in Algorithm~\ref{alg:rtt}.}, using both space and processing time that is sublinear in $N$.
As we will show in Theorem~\ref{thm:sublinear:exist}, to identify the $g$-heavy hitters, we require any function $g$ to be \emph{flow-additive}, which means that $g\left(\cup f_i\right)=\sum g(f_i)$.  
Intuitively, a flow-additive function on the entire network is the sum of the values of the functions for each flow in the network. 
For example, the function $g$ that computes total latency is flow-additive since the total latency of a network is defined to be the sum of the latencies of each flow. 
Other examples of flow-additive functions include total number of packets, total size of packets, number of out-of-order packets, number of retransmitted packets. 
On the other hand, the longest round-trip time is not flow-additive, since it is the maximum of the round-trip times of each flow.

Since a number of flow-additive functions seem to have natural lean algorithms to monitor the $g$-heavy hitters, we ask
\begin{quote}
For which flow-additive functions $g$ do there exist lean algorithms that compute the $g$-heavy hitters? 
\end{quote}

\begin{figure}[t]
\centering
\includegraphics[width=0.6\linewidth]{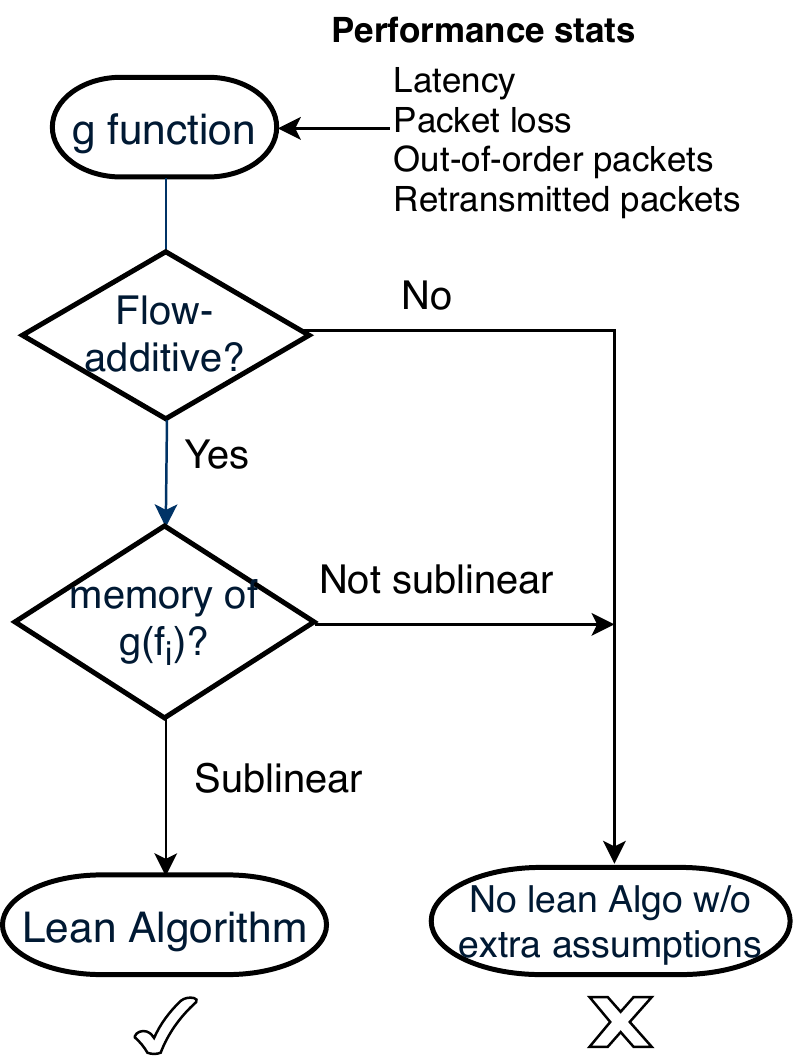}
\vspace{-1mm}
\tightcaption{High-level overview of Theorem~\ref{thm:sublinear:exist}.}\label{fig:lemma2}
\vspace{-2mm}\label{fig:model}
\end{figure}

\para{Characterization of flow-additive functions by lean algorithms:}
As shown in Figure~\ref{fig:lemma2}, we show that informally, a lean algorithm to compute the $g$-heavy hitters of a flow-additive function $g$ exists if and only if the memory consumption $s(g,M)$ of $g$ on a single flow is roughly sublinear (see Theorem~\ref{thm:sublinear:exist}) in $M$, the size of the input data \emph{for a particular flow}.

Intuitively, we observe that if a performance monitoring function is not flow-additive, then a significant challenge is finding the $g$-heavy hitters without maintaining approximate statistics for each flow; and if a function cannot even be estimated using space sublinear in the input data for a particular flow, then there is little hope it can be estimated across all of the flows. 
We show that some network performance issues, such as high latency, follow these two requirements and hence can be monitored by lean algorithms. 
At a high level, we randomly partition the input into several ``buckets'', so that all packets associated with flow $i$ are sent to the bucket associated with flow $i$. 
Given enough buckets, the $g$-heavy hitters are partitioned to separate buckets with high probability. 
We then run a separate algorithm for each bucket to identify each $g$-heavy hitter.

On the other hand, if some performance monitoring problems do not meet both of the two requirements above, such as  packet retransmissions, we can still find the $g$-heavy hitters with additional assumptions (see \S\ref{sec:app_loss}, for example). 

\begin{thm}
\label{thm:sublinear:exist}
Let $g$ be a flow-additive function, $\eps$ be the threshold for $g$-heavy hitters, and $N$ be the total number of flows, as well as an upper bound on the total input size per flow. 
There exists an algorithm to compute the $g$-heavy hitters using space $\O{s(g,N) \cdot \log N / \eps^4}$. 
Thus there exists a lean algorithm to compute the $g$-heavy hitters if $s(g,N)=o\left(\frac{N}{\log N}\right)$ and no such lean algorithm exists if $s(g,N)=\Omega(N)$. 
Moreover, if $s(g,N)=\polylog(N)$, then there exists a lean algorithm to compute the $g$-heavy hitters even if the total input size per flow is bounded by some polynomial in $N$.
\end{thm}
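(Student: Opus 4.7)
The plan is to prove the algorithmic upper bound first and derive all three corollaries (leanness when $s(g,N) = o(N/\log N)$, non-leanness when $s(g,N) = \Omega(N)$, and the polynomial per-flow input extension) from it. My construction mimics a CountSketch/CountMin-style hashing layer, but each bucket holds one instance of the black-box sublinear single-flow estimator for $g$ rather than a scalar counter. Concretely, I would pick a pairwise-independent hash $h : [N] \to [B]$ with $B = \Theta(1/\eps^4)$ and, when a packet arrives, route it to the estimator residing at bucket $h(\text{flow id})$. Because $g$ is flow-additive, the estimator at bucket $b$ approximates $\sum_{i : h(i) = b} g(f_i)$. Repeating this with $L = \Theta(\log N)$ independent hashes and reporting, for each candidate flow, the median of the bucket estimates it contributes to gives an estimate of $g(f_i)$ for every $i$.

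For the analysis, the key facts are that there are at most $1/\eps^2$ $\ell_2$-heavy hitters, so in a single layer with $B = \Theta(1/\eps^4)$ buckets a fixed heavy hitter is isolated from every other heavy hitter with probability $\geq 1 - (1/\eps^2)/B$, i.e.\ with constant probability. Conditioned on isolation, a second-moment argument over the pairwise-independent $h$ bounds the contribution of the light flows sharing the bucket by $\eps \|g\|_2$ with constant probability, where the $\eps^{-4}$ bucket count (as opposed to $\eps^{-2}$) is exactly what is needed to replace the random-sign cancellation of CountSketch in an unsigned additive setting. Taking the median across $\Theta(\log N)$ layers amplifies the per-flow success probability to $1 - 1/\poly(N)$, and a union bound over all $N$ flows recovers every $\ell_2$-heavy hitter. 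The total space is $\O{B \cdot s(g, N) \cdot L} = \O{s(g, N) \log N / \eps^4}$, matching the statement. The corollaries are then immediate: if $s(g, N) = o(N/\log N)$ the total space is $o(N)$, so the algorithm is lean; if $s(g, N) = \polylog(N)$ the same bound stays $\polylog(N)/\eps^4$ even when we allow the per-flow input size to be any fixed polynomial in $N$, since the sublinear routine then uses $s(g, \poly(N)) = \polylog(N)$. For the converse, when $s(g, N) = \Omega(N)$ I would plant a hard single-flow instance via a one-way communication reduction (\textsc{Index} or \textsc{Augmented-Index}) and pad with empty flows, forcing any heavy-hitter tester to solve the hard single-flow problem and thus use $\Omega(N)$ space, which is not sublinear in the number of flows.

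The main obstacle is step two: carrying the standard CountSketch-style $\ell_2$ heavy-hitter analysis over to a setting where each bucket stores an approximate sublinear estimator rather than an exact signed counter, and doing so without random signs. Two subtleties must be reconciled. First, the single-flow routine returns $\sum_{i : h(i) = b} g(f_i)$ only up to a $(1 \pm \eps')$ factor with some failure probability $\delta'$, so I need to pick $\eps'$ and $\delta'$ small enough (with only a polylogarithmic blow-up, absorbed into the big-$\mathcal O$) that the median-of-layers argument still folds the estimator error into the $\eps \|g\|_2$ tail. Second, the light-flow tail has to be controlled with an $\ell_2$-style second-moment bound rather than the usual CountSketch sign-cancellation, which is exactly why the bucket count must be $\Theta(1/\eps^4)$ rather than $\Theta(1/\eps^2)$. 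Once these two points are nailed down, the leanness dichotomy and the polynomial per-flow input extension follow mechanically by substitution.
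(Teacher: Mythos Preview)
Your proposal is sound and reaches the same space bound, but it differs from the paper in one structural respect: how the $\log N$ factor is spent.

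The paper also hashes flows into $\Theta(1/\eps^4)$ buckets using exactly the birthday-bound argument you give (at most $1/\eps^2$ many $\ell_2$-heavy hitters, so $\Theta(1/\eps^4)$ buckets isolate them pairwise with constant probability). But rather than repeating this across $\Theta(\log N)$ independent layers and taking medians, the paper uses a \emph{single} hashing layer and, within each bucket, runs a bit-by-bit recovery scheme: for each bit position $k\in\{1,\ldots,\log N\}$ it partitions the bucket into two sub-buckets according to the $k$th bit of the flow index, evaluates $g$ on each half, and reads off bit $k$ of the heavy hitter's identity by comparing the two values. This works because the paper first argues that the heavy hitter dominates the total $g$-mass of its bucket, hence dominates whichever half it lands in. The $2\log N$ sub-buckets per bucket yield the same $\Theta\bigl(\frac{1}{\eps^4}\,s(g,N)\log N\bigr)$ total.

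What the paper's route buys is direct recovery of heavy-hitter identities without enumerating all $N$ flows at query time; your median-of-layers scheme needs either enumeration or a separate candidate mechanism. What your route buys is a more standard CountSketch-style amplification story (constant success per layer, median over $\log N$ layers, union bound over $N$ flows) and more explicit handling of the black-box estimator's own $(\eps',\delta')$ error, which the paper does not discuss. The obstacle you flag---controlling the unsigned light-flow contribution by $\eps\lVert g\rVert_2$ with only $\Theta(1/\eps^4)$ buckets and no random signs---is precisely the step the paper dispatches with the phrase ``by similar reasoning,'' so you are not skipping anything the paper actually makes rigorous. For the converse, the paper gives a one-sentence observation (a single flow already requiring $\Omega(N)$ space forces the whole-network problem to as well); your \textsc{Index}-style reduction with a planted hard flow is a more formal rendering of the same point.
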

\begin{proof}
We first claim it is sufficient to consider the case where only a single flow $f_i$ has an $\ell_2$ heavy-hitter $g(f_i)$ (i.e., a flow $f_i$ whose value $g(f_i)$ is at least $\eps$ fraction of the total value $\sqrt{\sum g(f_i)^2}$ on the network). 
Note that at most $\O{\frac{1}{\eps^2}}$ flows can be $\ell_2$ heavy-hitters since $\frac{1}{\eps^2}(\eps\sqrt{\sum g(f_i)^2})^2=\sum g(f_i)^2$. 
Thus, by birthday bounds, hashing each flow into $\O{\frac{1}{\eps^4}}$ buckets maps each heavy-hitter to a separate bucket with constant probability. 
Hence, we can assume without loss of generality that only a single flow is a heavy-hitter in each bucket and by similar reasoning, we can assume that the contribution of the heavy-hitter to the bucket is greater than the sum of the remaining contributions to the bucket from other flows. 
Specifically, given a hash function $H:[N]\to [M]$ for $M = \O{\frac{1}{\eps^4}}$, we assume there exists some index $i$ such that

\[g(f_i)\ge\sum_{j:j\ne i,H(j)=H(i)}g(f_j).\]

We further partition each bucket into a number of sub-buckets, so that each sub-bucket receives all updates to a particular subset of all flows mapped to the bucket. 
We identify the index $i$ of the heavy-hitter mapped to this bucket in a bit-by-bit fashion by performing analysis on the sub-buckets as follows:
\begin{enumerate}
\item 
For each index $k$ with $1\le k\le\log N$, flow $j$ is mapped to sub-bucket $2k-1$ if the $k$-{th} bit of $j$ is $0$ and mapped to sub-bucket $2k$ if the $k$-{th} bit of $j$ is $1$.
\item
If $g$ computed on the sub-bucket $2k-1$ is greater than $g$ computed on the sub-bucket $2k$, we set the $k$-th bit of $i$ to be $0$. 
Otherwise, we set the $k$-th bit of $i$ to be $1$. 
\end{enumerate}

Observe that the first step partitions the flows of $\{j:j\ne i,H(j)=H(i)\}$ into $2\log N$ sub-buckets based on the parity of each bit of the index. 
However, since $g(f_i)\ge\sum_{j:j\ne i,H(j)=H(i)}g(f_j)$, then $g$ computed on the sub-bucket $2k-1$ has greater value than $g$ computed on the sub-bucket $2k$ if and only if the $k$-th bit of $i$ is $0$. 
Thus, the second step compares the values of $g$ computed on each pair of sub-buckets $2k-1$ and $2k$ to reveal the identity of the $k$-th bit of $i$, and so using all $\log N$ pairs of sub-buckets, we can identify all bits of $i$ (and thus $i$ itself). 
Since we use $\O{\frac{1}{\eps^4}\log N}$ sub-buckets, each requiring $s(g,N)$ space, the total space is $\O{\frac{s(g,N)\log N}{\eps^4}}$.

On the other hand, if approximating $g$ on a single flow with input size $N$ requires space $\Omega(N)$, then $g$ clearly cannot be monitored by a lean algorithm.
\end{proof} 

To show detailed examples of flow-additive functions and their corresponding lean algorithms, we design sketching algorithms to measure latency and packet loss, as described in~\S\ref{sec:app_latency} and~\S\ref{sec:app_loss}.

\section{Case Study with Lean Algorithms}\label{sec:algo}
In this section, we detail the specific algorithms for the case studies for which the computation model introduced in \S\ref{sec:model} can be monitored using lean algorithms. For simplicity of the presentation, we use Origin-Destination pair (OD-flow with a send and a receiver) to describe a network flow instead of using 5-tuple. Our algorithms estimate the 5-tuple flows in a similar way.

\subsection{Case Study I: High Latency}\label{sec:app_latency}
We formalize the problem as follows:
\begin{definition}[Latency Problem]
Given a stream of packets $e_i$ sender $s_i$, receiver $r_i$, packet id $p_i$ along with packet type $c_i$ ($c_i=-1$ if the packet is regular and $c_i=1$ if the packet is an acknowledgment), and time $t_i\in\mathbb{Z}^+$ at which the packet is recognized by the measurement point. Let the round trip time for packet id $p_i$ be $t_j-t_i$ where $t_j$ corresponds to the acknowledgment packet and $t_i$ corresponds to the regular packet.
Let $\RTT$ be the total round trip time for all packets in the data stream. When we measure only on certain types of packets, $\RTT$ is the aggregated round trip time for these packets.

Given constants $0<\eps,\delta<1$, we would like to output, with probability at least $1-\delta$, all flows $(s_i,r_i)$ whose total round trip time
exceeds $\eps\cdot\RTT$.
\end{definition}

A straightforward but na\"{\i}ve approach to monitor network latency would be to measure the latency across each of the $N$ flows in a large network, by storing per-flow latency measurements. 
However, the space required to store information for all flows could be prohibitively large. 

\para{Approach Overview:} 
We first show that the problem of round trip time is flow-additive and then describe the corresponding sublinear algorithm for measuring total RTT in a single flow. 
By Theorem~\ref{thm:sublinear:exist}, it follows that there exists a lean algorithm for monitoring influential flows with respect to total RTT, which we then explicitly describe.  
Given a flow $f_i$ connecting sender $s_i$ and receiver $r_i$, let $g(f_i)$ be the sum of the timestamps of the incoming packets (from $r_i$ to $s_i$) minus the sum of the timestamps of the outgoing packets (from $s_i$ to $r_i$). 
Then $\sum g(f_i)$ is the sum of the round trip times of all packets, which is exactly $g(\cup f_i)$, so the round trip time is flow-additive.

Intuitively, Theorem~\ref{thm:sublinear:exist} then provides a reduction to the problem of finding the heavy hitters in a stream. 
The universe of flows is the set of all sender-receiver pairs $(s_i,r_i)$. 
Upon sending a packet at time $t_i$, the ``counter'' for the pair $(s_i,r_i)$ is decremented by $t_i$, so that receiving the packet (e.g., ACK) at time $t_j$, the counter is incremented by $t_j$ and the total addition is $t_j-t_i$, the round trip time. 
We then report the pairs whose round trip time is an $\eps$-fraction of the total round trip time across all flows. 
Note that the heavy-hitter algorithm may not differentiate between a few packets with significantly high round trip time or significantly many packets with low round trip time. 
In practice, we can also obtain the mean round trip time if the flow is a heavy flow.
\begin{algorithm}[t]
\caption{Detect flows with high RTT}
\label{alg:rtt}
\textbf{Input:} A stream of elements $e_i=(s_i,r_i,p_i,c_i,t_i)$ with $s_i\in[n]$, $r_i\in[n]$, $c_i=\pm1$, and $t_i\in\mathbb{Z}^+$. \Comment{(Extracted packet header fields)}\\
\textbf{Output:} A list of  pairs $(s_i,r_i)$ with packets whose round trip time exceeds $\eps\cdot\RTT$.
\begin{algorithmic}[1]
\State{Let $N=2\cdot\binom{n}{2}$ and $B=\lceil\frac{9}{\eps^2}\rceil$.}
\State{Let $H_i: [N]\rightarrow[B]$ be a family of hash functions.}
\State{Let $G: [N]\rightarrow\{-1,1\}$ be a pairwise independent hash function.}
\State{Let $T$ be a $R\times B$ table, where $R=\lceil\log\frac{1}{\delta}\rceil$.}
\For{each element $e_i$:}
\If{$s_i<r_i$}
\State{Let $k\in[N]$ be the integer corresponding $(s_i, r_i)$.}
\For{each $1\le j\le R$}
\State{Set $T[j][H_j(k)]=T[j][H_j(k)]+G(k) \cdot t_i$.}
\EndFor
\Else
\State{Let $k\in[N]$ be the integer corresponding $(r_i, s_i)$.}
\For{each $1\le j\le R$}
\State{Set $T[j][H_j(k)]=T[j][H_j(k)]-G(k) \cdot t_i$.}
\EndFor
\EndIf
\EndFor
\For{each $(s_i,r_i)$ where $s_i<r_i$}
\State{Let $k\in[N]$ be the integer corresponding $(s_i, r_i)$.}
\State{The round trip time is the median of $|T[j][H_j(k)]|$ across $1\le j\le R$.}
\EndFor
\State{Output the flows whose round trip time is at least $\eps\cdot\RTT$.}
\end{algorithmic}
\end{algorithm}

\para{Analysis:} 
Formally, let $N=2\cdot\binom{n}{2}$, where $n$ is the number of nodes in the network. 
Let $H_i: [N]\rightarrow[B]$ be a family of hash functions, where $B$ is some integer that represents the number of buckets. 

Let $G: [N]\rightarrow\{-1,1\}$ be a pairwise independent hash function. 
Let $T$ be a $R\times B$ table, where $R=\lceil\log\frac{1}{\delta}\rceil$.

Let $s_i$ be a sender id and $r_i$ be a receiver id. 
Let $k\in[N]$ be the integer corresponding to $(s_i, r_i)$ for a regular packet and $(r_i,s_i)$ for an acknowledgment. 
Similarly, let $c_i=1$ if the packet is an acknowledgment and $c_i=-1$ otherwise.

Upon seeing each element $e_i=(s_i,r_i,p_i,c_i,t_i)$, we add $G(k)\cdot t_i \cdot c_i$ to $T[j][H_j(k)]$ for each $1\le j\le R$. 
At the end of the stream, the counter in the cell $T[j][H_j(k)]$ equals the sum of the round trip time of the packets from $s_i$ to $r_i$ in expectation, since $k$ corresponds to the ordered pair $(s_i,r_i)$ or $(r_i,s_i)$.
Thus, we use the median of the absolute values of all $H_i(k)$ for concentration inequalities. 
The intuition is that $c_i$ controls the direction of the message, so that regular packets and acknowledgment packets have different signs. 
Since $G$ is a pairwise independent hash function, the expectation of $T[j][H_j(k)]$ is the round trip total time between $s_i$ and $r_i$.

We present Algorithm~\ref{alg:rtt}, which follows CountSketch \cite{countsketch} for a universe of size $N$. 
\begin{thm}
\label{rtt:ex}
The expected value of $|T[j][H_j(k)]|$ for each $1\le j\le R$ is the cumulative round trip time of all packets on flow $(s_i,r_i)$.
\end{thm}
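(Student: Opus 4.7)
The plan is to mimic the standard \textsc{CountSketch} analysis, specialized to the signed-timestamp updates used in Algorithm~\ref{alg:rtt}. I would first fix a row index $1\le j\le R$ and a target ordered pair whose canonical index in $[N]$ is $k$, and unroll $T[j][H_j(k)]$ into a sum over all stream updates. Every packet $e_\ell$ with canonical flow-index $k_\ell$ contributes $G(k_\ell)\cdot c_\ell\cdot t_\ell$ to $T[j][H_j(k_\ell)]$, where $c_\ell=-1$ for a data packet and $c_\ell=+1$ for its acknowledgment. Grouping by flow-index, only flows $k_\ell$ with $H_j(k_\ell)=H_j(k)$ land in the target cell, and within each such flow, pairing every data packet with its ACK telescopes the inner sum $\sum_{\ell\,:\,k_\ell=k^\ast} c_\ell t_\ell$ into the cumulative round-trip time $\RTT_{k^\ast}$ for that flow.

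Next I would separate the target flow from its collisions, writing
\[T[j][H_j(k)] \;=\; G(k)\cdot\RTT_k \;+\; \sum_{\substack{k'\neq k\\ H_j(k')=H_j(k)}} G(k')\cdot\RTT_{k'}.\]
Multiplying by $G(k)$ and using $G(k)^2=1$, together with the pairwise independence of $G$ (so that $\mathbb{E}[G(k)G(k')]=0$ for $k'\neq k$), linearity of expectation eliminates every cross-term and leaves $\mathbb{E}[G(k)\cdot T[j][H_j(k)]]=\RTT_k$. Because $|G(k)|=1$, the signed decoding and the unsigned cell agree in magnitude, so the theorem's claim about $|T[j][H_j(k)]|$ reduces to the expectation of this signed estimator.

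The step I expect to be the main obstacle is the gap between $\mathbb{E}[X]$ and $\mathbb{E}[|X|]$ for $X=G(k)\RTT_k+Y$ with zero-mean noise $Y$: in general $\mathbb{E}[|X|]\ge|\mathbb{E}[X]|$ by Jensen, with strict inequality whenever $Y$ can flip the sign of $X$. To justify the statement as written, I would appeal to the heavy-hitter regime in which this theorem is applied. For an $\ell_2$-heavy flow $k$, the noise variance $\mathbb{E}[Y^2]=\sum_{k'\neq k,\,H_j(k')=H_j(k)}\RTT_{k'}^2$ is a small fraction of $\RTT_k^2$ with constant probability over $H_j$ (the $B=\lceil 9/\eps^2\rceil$ buckets are chosen precisely to make this Chebyshev-style bound go through), so $G(k)\RTT_k$ dominates $Y$ in magnitude and the two sides coincide; any residual sign-flip probability is then absorbed by the median over the $R=\lceil\log(1/\delta)\rceil$ independent rows that Algorithm~\ref{alg:rtt} actually reports. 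This last piece is really what turns the per-row expectation claim into the $(1\pm\eps)$-accurate, $(1-\delta)$-confidence estimator needed by Theorem~\ref{thm:sublinear:exist}.
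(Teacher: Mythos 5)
Your proof follows the standard CountSketch decomposition that the paper also intends, but you execute it more carefully in two ways that are worth noting.

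First, you treat the general stream: you telescope all data-ACK pairs within flow $k'$ into $\RTT_{k'}$ and decompose $T[j][H_j(k)]$ as $G(k)\RTT_k$ plus a zero-mean sum over colliding flows. The paper's proof instead restricts attention to a single packet/ACK pair on flow $k$ (``suppose also that no other message is passed between $s_i$ and $r_i$'') and the displayed sum there carries a stray $t_i$ independent of $k'$, which only makes sense in that degenerate case. Your version is the argument the proof should actually make.

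Second, and more importantly, you correctly identify a gap that the paper's proof silently steps over. The theorem's statement is about $\mathbb{E}\bigl[\,|T[j][H_j(k)]|\,\bigr]$, but the paper's proof ends by computing $\bigl|\mathbb{E}[T[r][H_r(k)]]\bigr|=t_j-t_i$, i.e., the absolute value of the expectation rather than the expectation of the absolute value. As you point out via Jensen, these coincide only when the collision noise $Y$ cannot flip the sign of $G(k)\RTT_k+Y$; in general $\mathbb{E}[|X|]>|\mathbb{E}[X]|$. Your resolution---observe that $G(k)\cdot T[j][H_j(k)]$ is an unbiased estimator of $\RTT_k$, and then invoke the heavy-hitter regime with $B=\lceil 9/\eps^2\rceil$ buckets plus the median over $R$ rows to control the residual sign-flip probability---is the honest way to recover the conclusion that the algorithm actually needs, and it matches how the subsequent theorem (the Chebyshev/median argument) uses this lemma. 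In short, your proposal is a faithful CountSketch analysis but is strictly more rigorous than the paper's own proof: it handles arbitrary packet streams per flow and it names and repairs the $\mathbb{E}|X|$ versus $|\mathbb{E}X|$ conflation.
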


\begin{proof}
Suppose for $s_i<r_i$, the first message on flow $(s_i,r_i)$ is from $s_i$ to $r_i$ at time $t_i$ and the return message from $r_i$ to $s_i$ is at time $t_j$. 
Suppose also that no other message is passed between $s_i$ and $r_i$.
Recall that each flow is associated to some integer between $1$ and $N$. 
Let $k\in[N]$ be the integer corresponding to $(s_i, r_i)$. 
For each $k'\in[N]$, let $Y_{k'}$ be a random variable with $Y_{k'}=0$ if $H_j(k')\neq H_j(k)$ and $Y_{k'}=1$ otherwise (if $H_j(k')= H_j(k)$). 
Intuitively, $Y_{k'}$ represents whether the information of flow $k'$ is hashed to the same bucket as the information of flow $k$. 
Similarly, let $c(k')=-1$ for an acknowledgment packet and $c(k')=1$ for a regular packet.
Then for each $1\le r\le R$, 
\[T[r][H_r(k)]=\sum_{k'\in[N]}c(k')\cdot G(k')\cdot t_i\cdot Y_{k'}.\] 
Since $G$ is a pairwise independent hash function mapping to $\{-1,1\}$, then $\EEx{G}=0$ for $k'\neq k$, then $\EEx{T[r][H_r(k)]}=c(k)\cdot G(k)\cdot t_i$. 
Since $G(k)=\pm1$ and $c(k)=1$ for a regular packet and $c(k)=-1$ for an acknowledgment packet, then $|\EEx{T[r][H_r(k)]}|=t_j-t_i$. 
Thus, the expected value of $|T[j][H_j(k)]|$ for each $1\le j\le R$ is the round trip time between $s_i$ and $r_i$.
The proofs for the other cases are symmetric.
\end{proof}

\begin{thm}
\label{rtt:var}
For $1\le j\le R$, the variance of $|T[j][H_j(k)]|$ is the sum of the squared round trip times across all flows.
\end{thm}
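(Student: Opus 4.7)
The plan is to mirror the expectation calculation from Theorem~\ref{rtt:ex} but carried out to the second moment, so that the argument reduces to the standard CountSketch variance analysis. First I would aggregate the update stream: for each flow $k'$, define $v_{k'} := \sum_i c_i t_i \cdot \mathbf{1}[\text{packet } i \text{ belongs to flow } k']$, which by the telescoping argument in Theorem~\ref{rtt:ex} equals (up to a fixed sign) the cumulative round-trip time of flow $k'$. Writing $T[j][H_j(k)] = \sum_{k' \in [N]} G(k')\, Y_{k'}\, v_{k'}$, with $Y_{k'} = \mathbf{1}[H_j(k') = H_j(k)]$, casts the estimator in the standard linear-sketch form of CountSketch applied to the vector $v$ of per-flow cumulative RTTs.

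Next I would expand $\EEx{(T[j][H_j(k)])^2}$ and exploit two properties: the pairwise independence of $G$ (so $\EEx{G(k_1)G(k_2)} = 0$ whenever $k_1 \neq k_2$) and the identity $G(k')^2 = 1$. All cross terms vanish, leaving $\EEx{T[j][H_j(k)]^2} = \sum_{k'} v_{k'}^2 \cdot \EEx{Y_{k'}}$, where $\EEx{Y_k} = 1$ and $\EEx{Y_{k'}} = 1/B$ for $k' \neq k$ by pairwise independence of $H_j$. Subtracting $(\EEx{T[j][H_j(k)]})^2 = v_k^2$ from Theorem~\ref{rtt:ex} yields $\Var(T[j][H_j(k)]) = \tfrac{1}{B}\sum_{k' \neq k} v_{k'}^2$, i.e., the sum of the squared round-trip times across the remaining flows (up to the bucket-count factor inherent in CountSketch).

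Finally, to pass from $T$ to $|T|$, I would invoke the elementary inequality $\Var(|X|) \le \Var(X)$, which follows from $\EEx{|X|^2} = \EEx{X^2}$ together with Jensen's bound $\EEx{|X|}^2 \ge (\EEx{X})^2$. The main obstacle is the bookkeeping for multi-packet aggregation: one must verify that the signed sum $v_{k'} = \sum c_i t_i$ really does telescope to the cumulative RTT under arbitrary interleavings of send and acknowledgment packets within a flow, and one must carefully account for packets whose acknowledgment has not yet arrived (these contribute a ``pending'' term to $v_{k'}$ that vanishes under a matched-stream assumption and otherwise must be absorbed into the variance bound). Once that aggregation step is in hand, the rest is a routine second-moment calculation for CountSketch.
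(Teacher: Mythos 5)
Your proof is correct and follows essentially the same route as the paper's: expand the second moment of the CountSketch estimator, eliminate cross terms via pairwise independence of $G$, and bound the surviving diagonal terms using $\EEx{Y_{k'}}=1/B$. You are somewhat more careful than the paper in aggregating the signed per-packet updates into per-flow totals $v_{k'}$ and in explicitly handling $\Var(|T|)\le\Var(T)$ (the paper silently conflates the two), but the underlying argument is the same.
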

\begin{proof}
For each flow $k$ connecting nodes $i$ and $j$, let $\RTT(k)$ be the total round-trip time of the packets between $i$ and $j$. 
For a fixed pair of nodes $u$ and $v$, let $k$ denote the flow associated with $(u,v)$. 
Then the variance of $|T[j][H_j(k)]|$ is 
\[\EEx{\left(\sum_{k'\in[N]}c(k')\cdot G(k')\cdot t_i\cdot Y_{k'}\right)^2}-\RTT(u,v)^2.\] 
Again note that $\EEx{G}=0$ for $k'\neq k$, so 
\[\var\left(T[j][H_j(k)]\right)=\EEx{\sum_{k'\in[N]}\RTT(k')^2\cdot Y^2_{k'}}-\RTT(u,v)^2.\]  
Since $H_j$ is a hash function mapping to $[B]$, then $\EEx{Y_k'}=\frac{1}{B}$ for $k'\neq k$ and so 
$\var\left(T[j][H_j(k)]\right)\le\frac{1}{B}\sum_{k'\in[N]}\RTT(k')^2.$\end{proof}

\begin{thm}
Let $\RTT$ be the total round trip time of all packets in the network.
Algorithm~\ref{alg:rtt} outputs all pairs $(s_i,r_i)$ such that the round trip time is at least $\eps\cdot\RTT$, with probability $1-\delta$ and using $\O{\frac{1}{\epsilon^2}\log n\log\frac{1}{\delta}}$ space and update time.
\end{thm}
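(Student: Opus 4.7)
The plan is to treat Algorithm~\ref{alg:rtt} as an instance of CountSketch applied to the stream of signed time contributions induced by packets and their acknowledgments, so that the standard CountSketch recovery guarantee yields the claimed bounds. First I would combine Theorem~\ref{rtt:ex} and Theorem~\ref{rtt:var}: the former identifies $\EEx{|T[j][H_j(k)]|}$ with the true flow RTT $\RTT(k)$, while the latter bounds $\var(T[j][H_j(k)])$ by $\frac{1}{B}\sum_{k'} \RTT(k')^2 \le \frac{1}{B}\,\RTT^2$. Applying Chebyshev's inequality to a single row with $B=\lceil 9/\eps^2\rceil$ then guarantees that
\[
\Pr\!\left[\,\bigl|\,|T[j][H_j(k)]| - \RTT(k)\,\bigr| \;>\; \eps\cdot\RTT\,\right] \;\le\; \frac{1}{9\eps^2 B} \;\le\; \tfrac{1}{9},
\]
so each row gives a constant-probability $\eps\cdot\RTT$ additive estimate of the flow's round-trip time.

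Next I would boost the success probability by the standard median-of-independent-estimators trick: with $R=\lceil \log(1/\delta)\rceil$ independent rows (using independent hash pairs $(H_j,G)$ in each row), a Chernoff bound on the number of rows that deviate by more than $\eps\cdot\RTT$ shows that the median estimate is within $\eps\cdot\RTT$ of $\RTT(k)$ with probability at least $1-\delta/N^2$ (after rescaling $\delta$ by a factor polynomial in $N$, which only changes constants in $R$). A union bound over the at most $N$ flows, or equivalently over the at most $O(1/\eps^2)$ candidate $\eps$-heavy flows, then guarantees that simultaneously every flow's estimate is within $\eps\cdot\RTT$ of its true value, with total failure probability at most $\delta$. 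Any flow with $\RTT(k)\ge\eps\cdot\RTT$ therefore has estimate $\ge \eps\cdot\RTT - \eps\cdot\RTT$... so to keep the recovery clean one standardly runs the sketch with threshold $\eps/2$ and reports all flows whose estimate exceeds $\eps\cdot\RTT/2$; all true $\eps$-heavy hitters are then reported. This last rescaling only affects constants in $B$ and hence in the space.

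For space, the table has $R\cdot B = O\!\bigl(\frac{1}{\eps^2}\log\frac{1}{\delta}\bigr)$ counters, and each counter stores at most the sum of all timestamps, which fits in $O(\log n)$ bits under standard polynomial word-size assumptions, giving the stated $O\!\bigl(\frac{1}{\eps^2}\log n \log\frac{1}{\delta}\bigr)$ bound. Per-packet update touches one cell in each of the $R$ rows, so the per-update time is $O(R)=O(\log(1/\delta))$, and the dominant bound in the statement absorbs the additional $\frac{1}{\eps^2}\log n$ factor coming from the final pass over the table to emit heavy-hitter candidates.

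The main obstacle I expect is just bookkeeping: verifying that the signed encoding (using $c_i$ to flip the sign for acknowledgments and the canonical ordering $s_i<r_i$ to pick a single bucket per unordered pair) really makes each flow's contribution telescope to $\sum_{\text{pairs}}(t_{\text{ack}}-t_{\text{sent}})$, rather than, say, leaving a residual depending on unmatched packets. Theorems~\ref{rtt:ex} and \ref{rtt:var} already do this work for a single pair, so the remaining argument is a careful extension to many pairs per flow and to the $\ell_1$ version of the heavy-hitter guarantee (as flagged in the earlier footnote after Theorem~\ref{thm:sublinear:exist}). Once that is in hand, the rest is the standard CountSketch analysis outlined above.
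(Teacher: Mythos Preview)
Your proposal is correct and follows essentially the same route as the paper: apply Chebyshev's inequality using the expectation and variance computed in Theorems~\ref{rtt:ex} and~\ref{rtt:var} to get a constant-probability per-row guarantee, then boost with the median over $R=\O{\log(1/\delta)}$ rows, and read off the $\O{\frac{1}{\eps^2}\log n\log\frac{1}{\delta}}$ space and update-time bounds. The only cosmetic difference is that the paper states the single-row deviation bound in $\ell_2$ form, namely $\eps\sqrt{\sum_{k'}\RTT(k')^2}$, whereas you pass immediately to the $\ell_1$ quantity $\eps\cdot\RTT$ via $\sum_{k'}\RTT(k')^2\le\RTT^2$; this is harmless and in fact matches the theorem statement more directly.
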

\begin{proof}
Consider a fixed flow $k$ and hash function $H_j$. 
By Chebyshev's inequality on Theorems~\ref{rtt:ex} and \ref{rtt:var}, for $B=\frac{9}{\eps^2}$,
\[\Pr\left[\Big||T[j][H_j(k)]|-\RTT(k)\Big|\ge\eps\sqrt{\sum_{k'}\RTT(k')^2}\right]\le\frac{1}{3}.\]
Thus Algorithm~\ref{alg:rtt} detects whether $\RTT(k)$ is at least $\eps$ fraction of the total round trip time.
Taking the median of $\O{\log\frac{1}{\delta}}$ parallel hash functions boosts the probability to $1-\delta$. 
Observe that each update modifies a single entry in each row of the table by a basic arithmetic operation. 
Hence, the update time is also $\O{\frac{1}{\epsilon^2}\log n\log\frac{1}{\delta}}$.
\end{proof}

\begin{algorithm}[t]
\caption{Detect flows with high packet loss}
\label{alg:lost}
\textbf{Input:} A stream of elements $e_i=(s_i,r_i,p_i)$ with $s_i,r_i\in[n]$.\\
\textbf{Output:} A list of all pairs $(s_i,r_i)$ with large packet loss.
\begin{algorithmic}[1]
\State{Let $N=2\cdot\binom{n}{2}$ and $B=\lceil\frac{9}{\eps^2}\rceil$.}
\State{Let $H_i: [N]\rightarrow[B]$ be a family of hash functions.}
\State{Let $G: \mathbb{Z}\rightarrow\{-1,1\}$ be a pairwise independent hash function.}
\State{Let $T$ be a $R\times B$ table, where $R>1$ is any constant integer.}
\For{each element $e_i$:}
\State{Let $k\in[N]$ be the integer corresponding $(s_i, r_i)$.}
\For{each $1\le j\le R$}
\If{$p_i$ is odd}:
\State{Set $T[j][H_j(k)]=T[j][H_j(k)]+G\left(\frac{p_i+1}{2}\right)$.}
\Else
\State{Set $T[j][H_j(k)]=T[j][H_j(k)]-G\left(\frac{p_i}{2}\right)$.}
\EndIf
\EndFor
\EndFor
\For{each $(s_i,r_i)$}
\State{Let $k\in[N]$ be the integer corresponding $(s_i, r_i)$.}
\State{Let $f_k$ be the median of $|T[j][H_j(k)]|$ across $1\le j\le R$.}
\If{$f_k>\eps\sum_{x\in[N]}f_x$}
\State{Output $(s_i,r_i)$.}
\EndIf
\EndFor
\end{algorithmic}
\end{algorithm}

\subsection{Case Study II: High Packet Loss}\label{sec:app_loss} 
We present the problem as follows:
\begin{definition}[Packet Loss Problem]
Given a stream of elements $e_i=(s_i,r_i,p_i)$ with sender id $s_i\in[n]$, receiver id $r_i\in[n]$, and packet id $p_i$, let the number of lost packets $f_j$ for flow $j\in\left[N\right]$ be the number of packets that never appear and whose ids are less than the maximum packet id for flow $j$.  
Let $P$ be the total number of lost packets across all flows in the data stream. 
Given that some (constant) number of the flows have high packet loss, identify these flows in expectation.
\end{definition}

\para{Approach Overview:} 
We describe our approach in Algorithm~\ref{alg:lost}. 
We solve the problem by offering a reduction to the problem of identifying a random walk. 
Ideally, we would like to transform the updates so that a flow with high packet loss will be a random walk, while a flow that does not have high packet loss will map to a walk with moderate amounts of structure. 
Again, the universe of flows is the set of all sender-receiver pairs $(s_i,r_i)$.
We pair packet ids so that receiving both $p_i$ and $p_{i+1}$ will cancel out for $(s_i,r_i)$. 
On the other hand, if packet $p_i$ arrives but $p_{i+1}$ does not, then the counter for (the hash of) position $(s_i,r_i)$ will change by one in a random direction. 
Thus, if a large number of packets for $(s_i,r_i)$ is missing, then the counter will experience a random walk. 
We then report the pairs with large counters. 
Observe that the number of lost packets in a flow must also be an $\eps$-fraction of the total number of lost packets. 
Otherwise, the heavy-hitters algorithm cannot differentiate between one flow with a medium number of lost packets or several flows with a small number of lost packets. Formally, we have the following result:
\begin{thm}
Let $\alpha_i$ be the total number of packets lost across each flow $i$ and $\TPL=\sum\sqrt{\alpha_i}$. 
Algorithm~\ref{alg:lost} outputs all sender-receiver pairs $(s_i,r_i)$ with number of missing packets (assuming a uniform distribution of lost packets) is at least $\eps\cdot\TPL$, in expectation and using $\O{\frac{1}{\epsilon^2}\log n\log\frac{1}{\delta}}$ space and update time. 
\end{thm}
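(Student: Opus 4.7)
The plan is to decompose the analysis into three ingredients, mirroring the CountSketch-style proof of Algorithm~\ref{alg:rtt} but with a random-walk layer sitting on top of the pairing of packet ids.

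First, I would fix a flow $k$ and row $j$ and examine the contribution of flow $k$ to the bucket $T[j][H_j(k)]$, ignoring collisions. Grouping packets into consecutive pairs $(2m-1,2m)$: if both arrive, the contributions $G(m)$ and $-G(m)$ cancel; if neither arrives, nothing is written; if exactly one arrives, the bucket receives a single $\pm G(m)$ term. Under the assumption that each packet is independently lost with probability $p_k$, a standard Chernoff bound shows that the number $N_k$ of singleton pairs concentrates at $\Theta(\alpha_k)$. Conditioning on which pairs are singletons and invoking pairwise independence of $G$, the flow-$k$ contribution to $T[j][H_j(k)]$ is a signed random walk of length $N_k$, whose conditional second moment is exactly $N_k = \Theta(\alpha_k)$.

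Second, following the CountSketch template already deployed in Theorems~\ref{rtt:ex} and~\ref{rtt:var}, pairwise independence of $H_j$ and $G$ makes the cross-flow terms uncorrelated, and each colliding flow contributes with probability $1/B$, giving
\begin{equation*}
\EEx{T[j][H_j(k)]^2} \;=\; \Theta(\alpha_k) \;+\; \frac{1}{B}\sum_{k'\ne k}\Theta(\alpha_{k'}).
\end{equation*}
Setting $B=\lceil 9/\eps^2\rceil$ as in Algorithm~\ref{alg:lost} and applying Chebyshev's inequality shows that $|T[j][H_j(k)]|$ tracks $\sqrt{\alpha_k}$ up to an additive $\eps\cdot\sqrt{\sum_{k'}\alpha_{k'}}$ with constant probability. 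Taking the median $f_k$ across the $R=\O{\log(1/\delta)}$ rows boosts the per-query success probability to $1-\delta$ via the standard median-of-independent-estimators argument, exactly as in the RTT proof. Comparing $f_k$ against $\eps\sum_x f_x$ then returns, in expectation, the flows whose random-walk signal $\sqrt{\alpha_k}$ is an $\eps$-fraction of $\sum_i\sqrt{\alpha_i}=\TPL$. The space and per-update time bound $\O{\frac{1}{\eps^2}\log n \log\frac{1}{\delta}}$ follow directly from the $R\times B$ table and the fact that each arriving packet modifies exactly one cell per row by $\pm 1$.

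The main obstacle is the first step: quantifying the walk precisely under the uniform-loss assumption so that the magnitude stored for flow $k$ really is $\Theta(\sqrt{\alpha_k})$ and not $\Theta(\alpha_k)$ or something smaller. This needs a careful combination of a Chernoff bound on $N_k$ (to rule out anomalously few singleton pairs) with a Khintchine-type or subgaussian tail bound on the walk conditional on $N_k$, so that the contribution of flow $k$ to the bucket concentrates on the right scale. Getting this right is what licenses stating the guarantee in terms of $\TPL=\sum_i\sqrt{\alpha_i}$, which is the natural scale once the ``number of missing packets'' is read through the random-walk encoding rather than directly as a count.
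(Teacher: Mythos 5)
Your outline fills in considerably more detail than the paper's own argument, which is a short informal paragraph: it observes that a lossless stream has $\left|\sum q_i\right|\le 1$, that a stream with $m$ missing packets corresponds to a unit-step random walk of expected displacement $\sqrt{2m/\pi}$, concludes ``thus we can identify a flow with a large number of lost packets,'' and then explicitly caveats that the algorithm ``only holds in expectation and may not provide a good estimation in some packet loss situations.'' You correctly recover the pairing/cancellation structure and the random-walk encoding, and your first two ingredients (singleton pairs, Chernoff on $N_k$, conditional second moment $N_k$) are sound. Where you diverge from the paper is in claiming a Chebyshev-plus-median argument can give a $1-\delta$ guarantee, and that is not a detail to be filled in later --- it is a gap that cannot be closed by the tools you name, which is why the theorem is stated only ``in expectation.''

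Two things go wrong. First, Chebyshev in the CountSketch template bounds $\bigl|T - \EEx{T}\bigr|$ where $\EEx{T}$ equals (up to sign) the deterministic quantity being estimated. Here the contribution of flow $k$ is $W_k=\sum_m\sigma_m G(m)$ with $\EEx{W_k}=0$ and $\EEx{W_k^2}=N_k$, so the ``true value'' $\sqrt{\alpha_k}$ is not $\EEx{T}$ but rather the expected magnitude of a random variable whose standard deviation is itself $\Theta(\sqrt{\alpha_k})$. There is no concentration of $|W_k|$ around $\sqrt{\alpha_k}$: the small-ball probability $\Pr[|W_k|<\eta\sqrt{\alpha_k}]$ is $\Theta(\eta)$, so a heavy-loss flow is missed with constant probability. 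A Khintchine or subgaussian bound only caps the upper tail; it does nothing against this lower-tail failure mode. Second, even if one were willing to settle for estimating $|W_k|$ rather than $\sqrt{\alpha_k}$, the median across rows cannot resample the walk: Algorithm~\ref{alg:lost} uses a \emph{single} sign hash $G$ shared by every row $j$, so $W_k$ is identical in every row and the median only averages out the $H_j$-dependent collision noise. The ``median of $R=\O{\log(1/\delta)}$ independent estimators'' step therefore does not apply (the algorithm also states only that $R>1$ is a constant). The paper's proof is an honest acknowledgement that the random-walk encoding yields a guarantee in expectation and nothing stronger; your plan to upgrade it to a high-probability statement would require an essentially different mechanism, such as independent sign hashes per row together with an anticoncentration argument, and even then the per-row lower-tail issue would remain.
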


\para{Analysis:} 
Formally, let $n$ be the number of nodes in the network and $G:[m]\rightarrow\{-1,1\}$ be a hash function, where $m$ is the number of elements in the stream. 
Upon receiving packet $p_i$, define $q_i=G\left(\frac{p_i+1}{2}\right)$ if $p_i$ is odd and $q_i=-G\left(\frac{p_i}{2}\right)$ otherwise. 
Note that for a stream with no packet loss, $\left|\sum q_i\right|\le 1$. 
On the other hand, a stream that has missing packets corresponds to a random walk with unit step sizes. 
In fact, a stream with $m$ missing packets corresponds to a random walk of length $m$, which has length $\sqrt{\frac{2m}{\pi}}$ in expectation. 
Thus, we can identify a flow with a large number of lost packets. 
However, we note that Algorithm~\ref{alg:lost} only holds in expectation and may not provide a good estimation in some packet loss situations.

\subsection{Case Study III: High Out-of-Order Packets}\label{sec:app_oop}
To track all flows with a high number of out-of-order packets, we need to compare each incoming packet against the maximum sequence number and latest timestamp of the flow it belongs to. 
Without knowing this per-flow information, a specific packet cannot be classified as out-of-order packets. 
Therefore, a lower bound of $\Omega(n)$ counters are needed for  $n$ flows. 
\begin{lem}\label{lem:algo3:lower}
Any algorithm that finds all flows with a high number of out-of-order packets must use $\Omega(n\log n)$ bits of space, where $n$ is the number of flows.
\end{lem}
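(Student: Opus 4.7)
The plan is to establish the lower bound by a direct information-theoretic encoding argument: exhibit $n^n$ input instances that any correct algorithm must distinguish, so the state space has size at least $n^n$ and the bit-complexity at least $n\log n$. For each function $h:[n]\to[n]$, construct a prefix stream $S_h$ consisting of one packet per flow $j$ carrying sequence number $2h(j)$, with timestamps placed inside a common 3ms window. After processing $S_h$, the per-flow maximum sequence number on flow $j$ equals exactly $2h(j)$, so the family $\{S_h\}$ has $n^n$ members and encodes $n\log n$ bits of information about $h$.

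Next I would argue that every correct algorithm must reach a distinct memory state on each $S_h$. Suppose toward contradiction that $h\neq h'$ but the algorithm holds the same state after $S_h$ and $S_{h'}$. Pick a coordinate $j$ where, without loss of generality, $h(j)<h'(j)$. Then $2h(j)+1$ strictly separates the two maxima, since $2h(j) < 2h(j)+1 < 2h'(j)$. Extend both streams by $\tau+1$ additional packets on flow $j$ with sequence number $2h(j)+1$ and timestamps within the recency window, where $\tau$ is the algorithm's ``high-OOO'' threshold. In the $h$-instance, the appended packets exceed the current maximum sequence number on flow $j$ and so contribute no OOO packets; in the $h'$-instance, they all fall strictly below the current maximum and so are all counted as OOO, producing $\tau+1$ out-of-order packets on flow $j$. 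A correct algorithm must therefore report flow $j$ in the second case and omit it in the first, contradicting the assumption that the two post-extension executions coincide.

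The main obstacle is the edge case where neighbouring values of $h$ would leave no integer strictly between the two maxima to use as a probe; this is exactly why the construction spaces sequence numbers as $2h(\cdot)$ rather than $h(\cdot)$, ensuring $2h(j)+1$ is always a valid distinguisher whenever $h(j) \neq h'(j)$. A secondary subtlety is respecting the 3ms recency condition from the definition of OOO packets, which is handled by assigning all initial and probing timestamps within a common 3ms window so that the extension packets remain ``recent'' with respect to the prefix. Finally, since the reachable state space has at least $n^n$ elements, the working memory $S$ in bits satisfies $2^S \ge n^n$, yielding $S = \Omega(n\log n)$ as claimed.
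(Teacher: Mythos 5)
Your proof is correct and follows essentially the same strategy as the paper's: construct a $2^{\Theta(n\log n)}$-sized family of clean prefix streams (the paper uses partitions of $n^2$ packets among $n$ flows, you use functions $h:[n]\to[n]$ encoded as per-flow sequence numbers), invoke a pigeonhole argument to find two prefixes that collide in memory, and append a short probe that forces different correct outputs. Your version is somewhat more careful than the paper's — you explicitly resolve the probe-value separation by using even sequence numbers, explicitly place all timestamps within the 3ms recency window, and give a deterministic rather than probabilistic distinguishing argument — but the underlying information-theoretic encoding idea is identical.
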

\begin{proof}
Consider the distribution where $n^2$ packets are partitioned among $n$ flows, and each of them contain no out-of-order packets. 
Thus, the state of the packets looks like ($p_1$, $p_2$, ..., $p_n$), where each $p_i$ is the packet number of flow $i$, and the sum of the $p_i$'s is $n^2$. 
There are roughly $\binom{n^2}{n}$ such possibilities, which is $2^{\Omega(n\log n)}$.

If we use less than $\frac{1}{2}\binom{n^2}{n}$ bits of space, a counting argument shows that there are ``many'' pairs of states $X$ and $X'$ that are mapped to the same memory configuration. 
Hence, there exists some flow $f_i$ such that $X$ reports $q_1$ packets seen by $f_i$ while $X'$ reports $q_2$ packets seen by $f_i$ with $q_1\neq q_2$. 
Without loss of generality, suppose the packets on $f_i$ reported by $X$ are $\{1,\ldots,q_1\}$ while the packets reported on $f_i$ by $X'$ are $\{1,\ldots,q_2\}$. 
Suppose an additional packet arrives on flow $f_i$ and with $\frac{1}{2}$ probability, the packet ID is $q_1-1$ and with $\frac{1}{2}$ probability, the packet ID is $q_2-1$. 
Then the probability the algorithm correctly identifies the number of out-of-order packets is at most $\frac{1}{2}$. 
\end{proof}

\begin{algorithm}[t]
\caption{Algorithm for out-of-order packets}
\label{alg:oopacket}
\textbf{Input:} A stream of elements $e_i=(s_i,r_i,p_i, t_i)$.\\
\textbf{Output:} A list including all large flows $(s_i,r_i)$ with a high number of out of order packets.
\begin{algorithmic}[1]
\State{Let $T$ be a $r \times 2$ table, with $r=\frac{1}{\epsilon}$.}
\State{Let $Q$ be a min priority queue where each element is an ordered pair $(f_i,t_i)$ consisting of a flow $f_i$ and the time the last packet for the flow was received.}
\For{each element $e_i$:}
\State{Let $k_i\in[N]$ be the flow associated with $(s_i,r_i)$.} 
\If{$k_i$ is in $Q$}
\If{$p_i\le MaxSeq$}
\If{$T[j][1]=k_i$ for some $j$}
\State{$T[j][2]\gets T[j][2]+size(e_i)$.}\\
\Comment{Increment counter for $e_i$. When each packet is considered as the same weight, $size(e_i)=1$}
\ElsIf{$T[j][2]\ge size(e_i)$ for all $j$}
\State{$T[j][2]\gets T[j][2]-size(e_i)$ for all $i$.}\\
\Comment{Decrement all counters.}
\Else
\State{$z\gets\argmin T[j][2]$.}
\State{$y=size(e_i)-T[z][2]$.}
\State{$T[j][2]\gets T[j][2]-y$ for all $j$.}
\State{$T[z][1]\gets k_i$ and $T[z][2]=y$.}
\EndIf
\Else
\State{$MaxSeq\gets p_i$.}
\EndIf
\Else
\State{Q.push($(f_i,t_i)$)}
\EndIf
\For{any $j$ with $t_j<t_i+3ms$}
\State{Q.pop($(f_j,t_j)$)}
\EndFor
\EndFor
\end{algorithmic}
\end{algorithm}

Although it does not seem evident how to approximate the number of out-of-order packets on a single flow using space sublinear in the input of the flow, we nevertheless obtain a lean algorithm with the following relaxation. 
Namely, if we assume that the number of packets that arrive within some time window, such as 3ms, is a constant bounded amount, then we can track the number of out-of-order packets on a single flow using sublinear space. 
Consider the following variant of the out-of-order problem:

\begin{definition}[High Out-of-Order Packets]
Given a stream of elements $e_i=(s_i,r_i,p_i, t_i)$ with sender id $s_i\in[n]$, receiver id $r_i\in[n]$, packet id $p_i$, and timestamp $t_i\in[m]$, we define the number of out-of-order packets for flow $j\in[n]$ as the number of packets that are not received in order. 
Suppose $MaxSeq$ is the maximum received ID number for a flow so that out-of-order packets have $p_i < MaxSeq$ and further assume that out-of-order packets arrive within some period of time (e.g. 3ms) after the packet with $MaxSeq$ was transmitted. 
Let $P$ be the total number of out of order packets across all flows in the data stream. 
Given a constant $0<\eps<1$, we would like to output all flows $i$ such that the number of out of order packets $f_i$ exceeds $\eps\cdot P$.
\end{definition}

Thus, we have the following result and give in Algorithm~\ref{alg:oopacket} the full algorithm for the high out-of-order packets problem. 

\begin{thm}[Informal]
Let $P$ be the number of out-of-order packets (that appear within 3 ms of the latest packet). 
Algorithm~\ref{alg:oopacket} returns the flows that have at least $\eps\cdot P$ out-of-order packets, if any exist, using $\O{\frac{1}{\eps}\log n}$ space and update time, along with the space necessary to maintain all packets within that arrive within $3ms$.
\end{thm}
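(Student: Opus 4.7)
The plan is to interpret Algorithm~\ref{alg:oopacket} as an instance of the Misra--Gries (a.k.a.\ \textsc{SpaceSavings}) heavy-hitters algorithm applied to the substream consisting only of packets that are classified as out-of-order, and then invoke the standard guarantees of that algorithm together with a short argument that the classification itself is correct given the 3ms relaxation.

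First I would isolate the substream on which Misra--Gries is actually run. A packet $e_i$ contributes to the counter table $T$ only when (i) its flow $k_i$ is currently in the min-priority queue $Q$ (so its previous packet was seen within the last 3ms and the associated $MaxSeq$ is maintained) and (ii) $p_i \le MaxSeq$, i.e., it is out-of-order. By the definition of the problem, every out-of-order packet by assumption arrives within 3ms of the packet with $MaxSeq$, so $Q$ contains its flow and $MaxSeq$ is available; hence every out-of-order packet is indeed forwarded into the $T$-update logic, and only such packets are. So the substream fed into $T$ is exactly the multiset of out-of-order packets, of total weight $P$ (weighted by $size(e_i)$, which is $1$ in the unweighted case).

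Next I would observe that the three-case update to $T$ inside the ``$p_i \le MaxSeq$'' branch is literally the Misra--Gries recipe with $r = \lceil 1/\eps\rceil$ counters, generalized to weighted increments of $size(e_i)$: if the flow has a counter, increment it; else if every counter is at least $size(e_i)$, subtract $size(e_i)$ from each; else redistribute so that the minimum counter is replaced with the residual weight $y = size(e_i) - \min_j T[j][2]$ under the new key $k_i$ (and all other counters are reduced by the same $y$). A standard potential argument for Misra--Gries shows that for any flow $k$, the final value $T[j][2]$ (if $k$ appears in $T$) differs from its true count $f_k$ in the out-of-order substream by at most $P/r \le \eps P$, and any flow with $f_k > \eps P$ must appear in the table at the end. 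Thus scanning $T$ and reporting its tracked flows returns a superset containing every flow with more than $\eps P$ out-of-order packets, establishing correctness.

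Finally, for the resource bounds: $T$ holds $r = O(1/\eps)$ rows, each storing a flow identifier in $O(\log n)$ bits and a counter of value at most $P$; the update performs a constant number of passes over these $O(1/\eps)$ rows, giving $O((1/\eps)\log n)$ space and per-packet update time. The queue $Q$ is maintained separately and uses only the space needed to store packets whose timestamps lie within the 3ms window, as stated. The main obstacle I anticipate is not the Misra--Gries analysis itself, which is textbook, but rather being careful about the two bookkeeping issues: (a) that the ``active flow'' test via $Q$ correctly matches the problem's 3ms assumption so that every out-of-order packet really is counted, and (b) handling the weighted ($size(e_i)$) version of the Misra--Gries decrement step, which requires a slightly more delicate potential function than the unit-weight case but still yields additive error $\eps P$ with $1/\eps$ counters.
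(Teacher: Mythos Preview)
Your proposal is correct and follows essentially the same approach as the paper: identify the substream of out-of-order packets via the 3ms queue, feed it into a Misra--Gries/\textsc{SpaceSavings} structure with $r=\lceil 1/\eps\rceil$ counters, and invoke the standard heavy-hitters guarantee. The paper's own justification is just a short informal paragraph (it even mislabels the method as a ``tug-of-war sketch'' while describing Misra--Gries), so your write-up is in fact more careful than the original, particularly in handling the weighted $size(e_i)$ increments and the bookkeeping that the 3ms assumption ensures every out-of-order packet is actually counted.
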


Here we track the flows with the highest number of out-of-order packets using a tug-of-war sketch. 
At each point, we maintain counters for certain flows that we have seen so far. 
When one of these flows is determined to have an additional out-of-order packet, the corresponding counter for that flow is incremented. 
When a different flow with an out-of-order packet is encountered, each counter is decremented. 
If the counter for a certain flow reaches zero, it can be replaced with a different flow. 
Since there are $\frac{1}{\eps}$ counters, any flow with at least $\eps\cdot P$ out-of-order packets will be output by the data structure at the end.

\subsection{Case Study IV: High Number of Retransmitted Packets}
\label{sec:app_retransmission}
Suppose there exists a traffic network with a central hub that can measure packets according to the route of the packet and the packet ID. 
One attribute of a problematic flow is a high number of retransmissions, packets that somehow fail to send and must be resubmitted. 
Although there does not seem to be an obvious way of approximating the number of retransmissions on a single flow using space sublinear in the input of that particular flow, we can still obtain a lean algorithm with the following relaxation. 
We formalize the problem as follows: 
\begin{definition}[Retransmitted Packets Problem]
Given a stream of elements $e_i=(s_i,r_i,p_i)$ with sender id $s_i\in[n]$, receiver id $r_i\in[n]$, and packet id $p_i$, let the number of retransmitted packets $f_j$ for flow $j\in\left[\binom{n}{2}\right]$ be the number of packets whose ids appear at least twice. 
We say that a flow has high transmission if the average packet in the flow is retransmitted $k$ times for some $k>1$. 
Given that some (constant) number of the flows has high retransmission, identify these flows with probability at least $2/3$.
\end{definition}
We instead relax the problem to finding the elephant flows that have high retransmission. 
That is, we report the flows that have high retransmission and at least $\eps$-fraction of $T$, the total number of packets sent across the network. 

We use a CountSketch algorithm to continuously report the flows with at least $\frac{\eps}{2} T$ packets. 
For each flow reported by the algorithm, we approximate the average number of retransmissions in the flow by tracking the total number of subsequent packets in the flow, as well as an approximation of the number of distinct packets sent by the flow, such as by using an algorithm of~\cite{KaneNW10}.   
If the CountSketch algorithm ever stops reporting that a particular flow is a heavy-hitter, then we discontinue tracking of the packets of the flow. 
Observe that any with at least $\eps T$ packets is tracked after $\frac{\eps}{2}T$ packets arrive. 
Hence if the flow sent each packet an average of at least $k$ times, then on the remaining packets, the average number of retransmissions for each packet is at least $\frac{k}{2}$, after the flow is tracked. 
Since we maintain a $2$-approximation of the number of distinct elements, then the average \emph{reported} number of retransmissions is at least $\frac{k}{4}$. 
On the other hand, if the flow previously sent each packet an average of less than $\frac{k}{16}$ times, then on the remaining packets, the average number of retransmissions for each packet is less than $\frac{k}{8}$, after the flow is tracked. 
Hence, the average reported number of retransmissions is less than $\frac{k}{4}$.

\begin{algorithm}[t]
\caption{Algorithm for high retransmissions}
\label{alg:retransmissions}
\textbf{Input:} A stream of elements $e_i$, threshold $k$.\\
\textbf{Output:} A list including all large flows with an average number of retransmissions at least $k$.
\begin{algorithmic}[1]
\State{Use CountSketch to maintain a list $L$ of the flows with at least $\frac{\epsilon}{2}$ fraction of the total packets.}
\For{each element $e_i$:}
\For{each flow $f_j$ in $L$:}
\State{Track the number of elements $n_j$ in $f_j$.}
\State{Maintain a $2$-approximation of the number of distinct elements $d_j$ in $f_j$.}
\EndFor
\EndFor
\State{$F\leftarrow\emptyset$}
\For{each flow $f_j$ in $L$:}
\If{$\frac{n_j}{d_j}\ge\frac{k}{4}$:}
\State{Append $f_j$ to $F$.}
\EndIf
\EndFor
\State{Return $F$.}
\end{algorithmic}
\end{algorithm}

We give the full algorithm for the (relaxed) high number of retransmitted packets problem in Algorithm~\ref{alg:retransmissions}.
Thus, we have the following result:

\begin{thm}[Informal]
Let $T$ be the total number of packets sent on the network. 
There exists a streaming algorithm outputs all flows that send at least $\eps T$ packets and have an average retransmission rate of at least $k$ and reports no flows that have an average retransmission rate of less than $\frac{k}{4}$. 
The algorithm succeeds with constant probability and uses $\O{\frac{1}{\eps^2}\log n}$ space and update time.
\end{thm}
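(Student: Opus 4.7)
The plan is to establish the theorem by analyzing the three components of Algorithm~\ref{alg:retransmissions} in sequence and combining their guarantees. First I would invoke the standard CountSketch analysis to argue that the list $L$ correctly tracks every flow that is currently $(\eps/2)$-heavy with respect to packet count, with constant failure probability, using $\O{\frac{1}{\eps^2}\log n}$ space. This is the same heavy-hitters primitive already used in Algorithm~\ref{alg:rtt}, so I would quote the guarantee rather than reprove it.

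Next I would show that every flow $f_j$ with at least $\eps T$ total packets gets inserted into $L$ no later than the moment its own prefix reaches $(\eps/2)T$ packets. The argument is immediate: at that moment the total number of stream packets processed so far is at most $T$, so $f_j$'s fraction of the stream is at least $\eps/2$, qualifying it as a CountSketch heavy hitter. From that moment onward, every subsequent packet of $f_j$ is counted by $n_j$ and fed to the $F_0$ estimator of Kane--Nelson--Woodruff~\cite{KaneNW10}, which yields a $2$-approximation $d_j$ of the number of distinct packet IDs in the tracked suffix using $\O{\log n}$ space per tracked flow.

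The quantitative core of the proof is a gap lemma linking the observed ratio $n_j/d_j$ on the tracked suffix to the true average retransmission rate on the whole flow. In the \emph{high} direction, let $N_j \ge \eps T$ and $D_j$ denote the true packet and distinct counts of $f_j$ with $N_j/D_j \ge k$. Since at most $(\eps/2)T \le N_j/2$ packets are missed, the tracked suffix satisfies $n_j \ge N_j/2$, and because the number of distinct IDs on the suffix is bounded by $D_j$, the true ratio on the suffix is at least $k/2$. The $2$-approximation guarantee of the $F_0$ sketch then yields $n_j/d_j \ge k/4$, so the flow is reported. In the \emph{low} direction, I would show that if the true ratio $N_j/D_j$ is sufficiently smaller than $k$ (the informal discussion uses $k/16$ as the natural threshold), the same chain of inequalities in reverse forces $n_j/d_j < k/4$, so such flows are excluded.

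The main obstacle will be tightening the \emph{low} direction, because the untracked prefix can artificially inflate the observed suffix ratio: a flow could introduce many distinct IDs in the first $(\eps/2)T$ packets and repeat them only afterwards, boosting $n_j/d_j$ without the corresponding true retransmission rate being high. I would handle this by exploiting $N_j \ge \eps T$ together with the crude bound that the untracked prefix contributes at most $(\eps/2)T$ distinct IDs, so the loss in the suffix's distinct count is controlled relative to $D_j$. This is exactly the asymmetry that forces a gap between the ``detect'' and ``reject'' thresholds and is the only place where the analysis must balance prefix-versus-suffix contributions carefully. Finally, the claimed bound of $\O{\frac{1}{\eps^2}\log n}$ on space and update time follows by summing the CountSketch cost with the at most $\O{1/\eps}$ tracked flows, each of whose counter and $F_0$ sketch fits in $\O{\log n}$ space.
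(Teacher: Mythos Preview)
Your proposal is correct and follows essentially the same approach as the paper, which presents only an informal argument preceding the theorem: run CountSketch at threshold $\eps/2$, start tracking $n_j$ and a $2$-approximate $F_0$ sketch $d_j$ once a flow is reported, and use the $k/4$ cutoff on $n_j/d_j$. Your decomposition into the three steps and the gap lemma mirrors the paper's reasoning almost exactly, including the observation that at most $(\eps/2)T$ packets of a true heavy flow are missed and hence the suffix ratio is at least $k/2$ before the $F_0$ approximation loss; you are in fact slightly more careful than the paper in flagging the asymmetry in the low direction, which the paper's informal discussion handles with the same $k/16$-vs-$k/4$ slack you mention but without spelling out the prefix-versus-suffix balancing.
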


\section{Implementation}\label{sec:impl}

We implemented prototypes of Algorithms 1 to 4, including a switch data-plane program (in P4-14) and a controller (in Python). In the data plane, we define each algorithm's per-packet behaviors through a series of processing stages, each of which has its dedicated resources, including register arrays and match-action tables (as shown in Figure~\ref{fig:pisa_overview}). 
Our data-plane program is compiled to Barefoot P4 studio with all algorithms combined. 

\begin{figure}[t]
\centering
\includegraphics[width=0.96\linewidth]{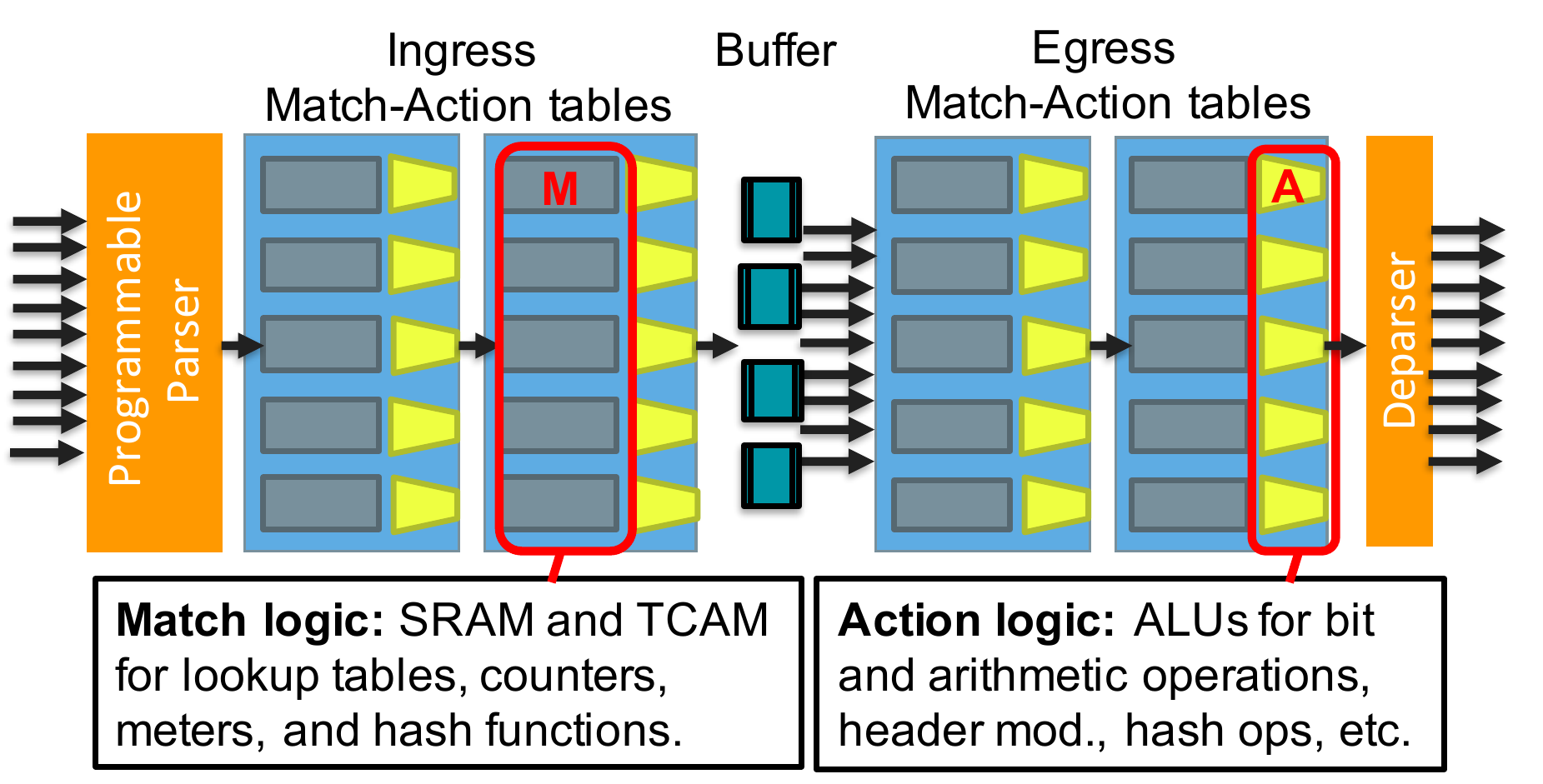}
\tightcaption{Programmable switch workflow.}
\label{fig:pisa_overview}
\end{figure}

\para{Data-Plane Implementation:} To realize our data-plane algorithms, there are three steps as the following: 
\begin{packeditemize}
\item[(1)] Extract required header fields (e.g., SrcIP, DstIP, Seq No., Timestamp, and Proto.) into P4 metadata with the programmable parser. These metadata are shared and can be accessed among all processing stages in a programmable switch.
\item[(2)] Leverage embedded CRC32 hash functions to hash the \emph{flow key} and update to the corresponding counters in register arrays with algorithm-specific updating schemes (e.g., plus or minus with current timestamp); 
\item[(3)] Report possible influential flow keys to the controller for offline estimation of the performance statistics.
\end{packeditemize}

\para{Practical considerations in the implementation:} We focus on a hardware switch implementation and have the following adjustments to account for hardware limitations. 
\begin{enumerate}[leftmargin=*]
    \item Hash functions: We use Count Sketch~\cite{countsketch} as a component in our algorithms, and the analysis of Count Sketch~\cite{countsketch} requires pair-wise independent hash functions. Barefoot Tofino switches have no hardware support for such guarantees. Thus, to ensure good ``independence'' between hash functions, we configure the embedded CRC32 with random polynomial hash seeds and select a set of hash seeds that produce significantly different hash values for the same flow key.
    \item Priority queue: In Algorithm 3, we use a min priority queue to maintain the set of most recent received packets. However, such a data structure with a non-linear number of operations per packet is not supported in existing programmable switch hardware. Instead, we use a two-way cuckoo table to cache the most recent packet information and we observe a negligible number of collisions when the size of the table is reasonably large (e.g., several times of receiving window size). 
    \item Top flow keys: When reporting the identities of the most influential flows, we cannot leverage data structures such as heap or priority queue to store the flow identities in the data plane. Instead, we leverage a packet mirroring feature in the switch with a Bloom filter~\cite{Bloom,  LuoSurvey19, FPFZ18}. Once the estimated statistic for a flow exceeds some threshold, the switch duplicates this packet and reports the copy to the controller. Since the number of possible influential flows is small, the Bloom filter in the switch (almost) ensures that each influential flow is reported once. An alternative implementation without any false positives is to use TCAM (ternary content-addressable memory) to maintain a one-to-one matching table. 
    \item Timestamp: When measuring the flow latency, we need to record the timestamps of each targeted pair of packets as the timestamp field in the TCP packet is not useful in our task. Once the programmable parser matches a particular type of packet, we leverage the high-precision timer in the switch to record \code{ingress_global_timestamp} in the P4 metadata. This \code{ingress_global_timestamp} comes with nanosecond-level precision.
\end{enumerate}
In \S\ref{sec:eval:hardware}, we detail the hardware resource usage of our prototype by compiling to Barefoot Tofino with Barefoot P4 Studio suite. 

\para{Control Plane:} We implement the controller as a Python module. The P4 framework allows us to define the API for control-data plane communication. We use a Thrift API to query the contents of data-plane register arrays (sketch data structure). After obtaining sketches, we estimate the top-$K$ influential flows by estimating the values of the stored flow identities on the sketches.
\begin{figure}[t]
\centering
\includegraphics[width=0.7\linewidth]{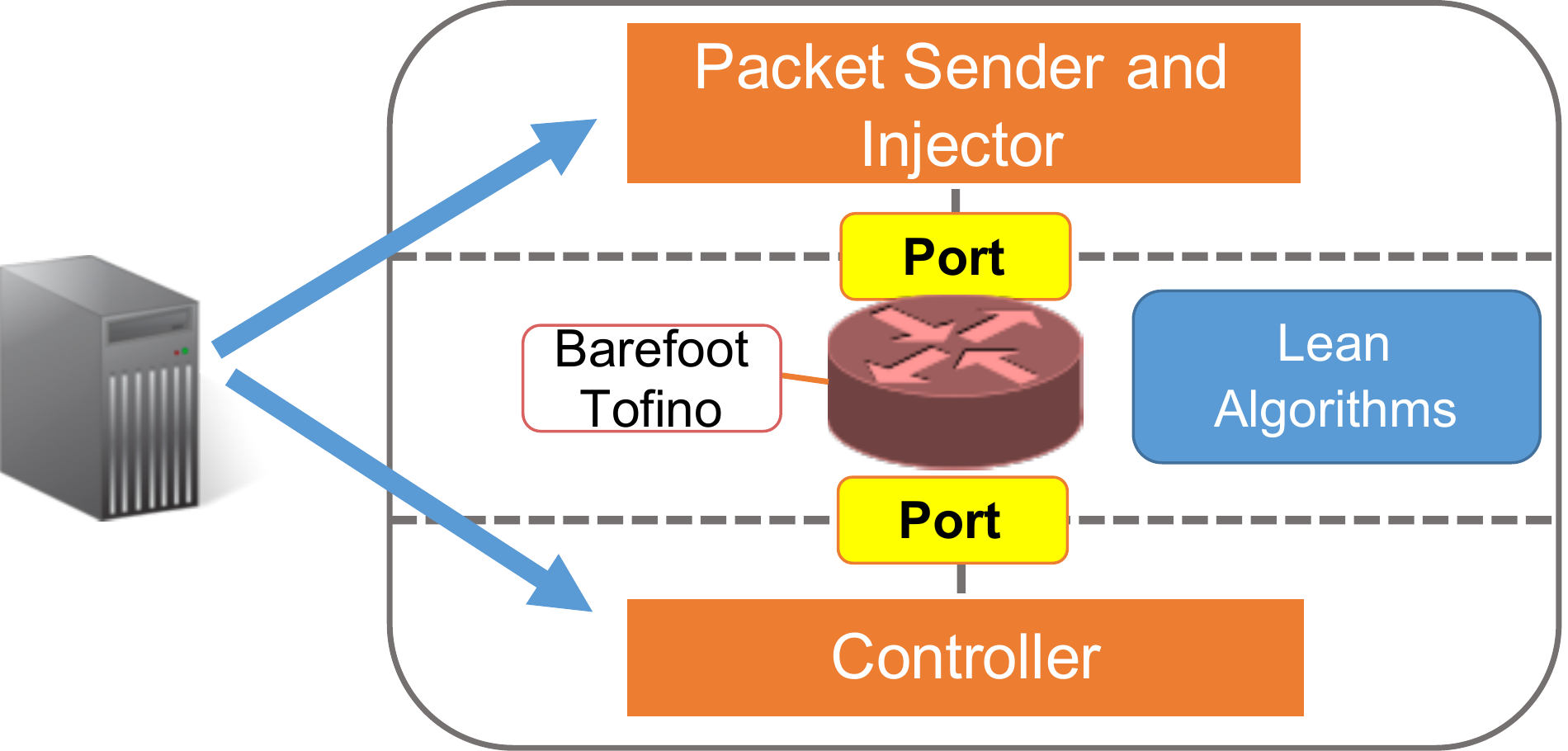}
\tightcaption{Evaluation
Setup.}
\label{fig:eval_overview}
\end{figure}

\begin{figure*}[t]
\centering
\subfigure[Latency]{
\label{fig:latency}
\includegraphics[width=0.237\textwidth]{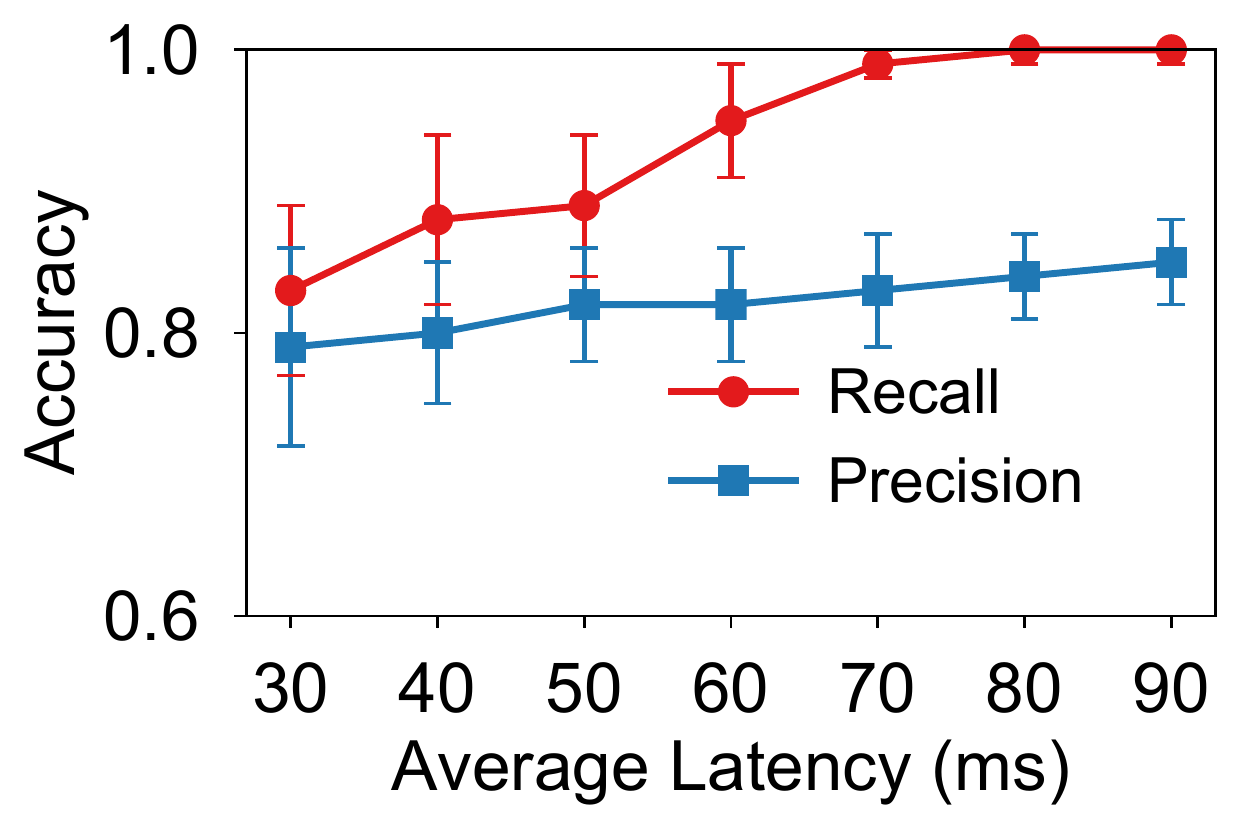}}
\subfigure[Latency]{
\label{fig:latency_tradeoff}
\includegraphics[width=0.237\textwidth]{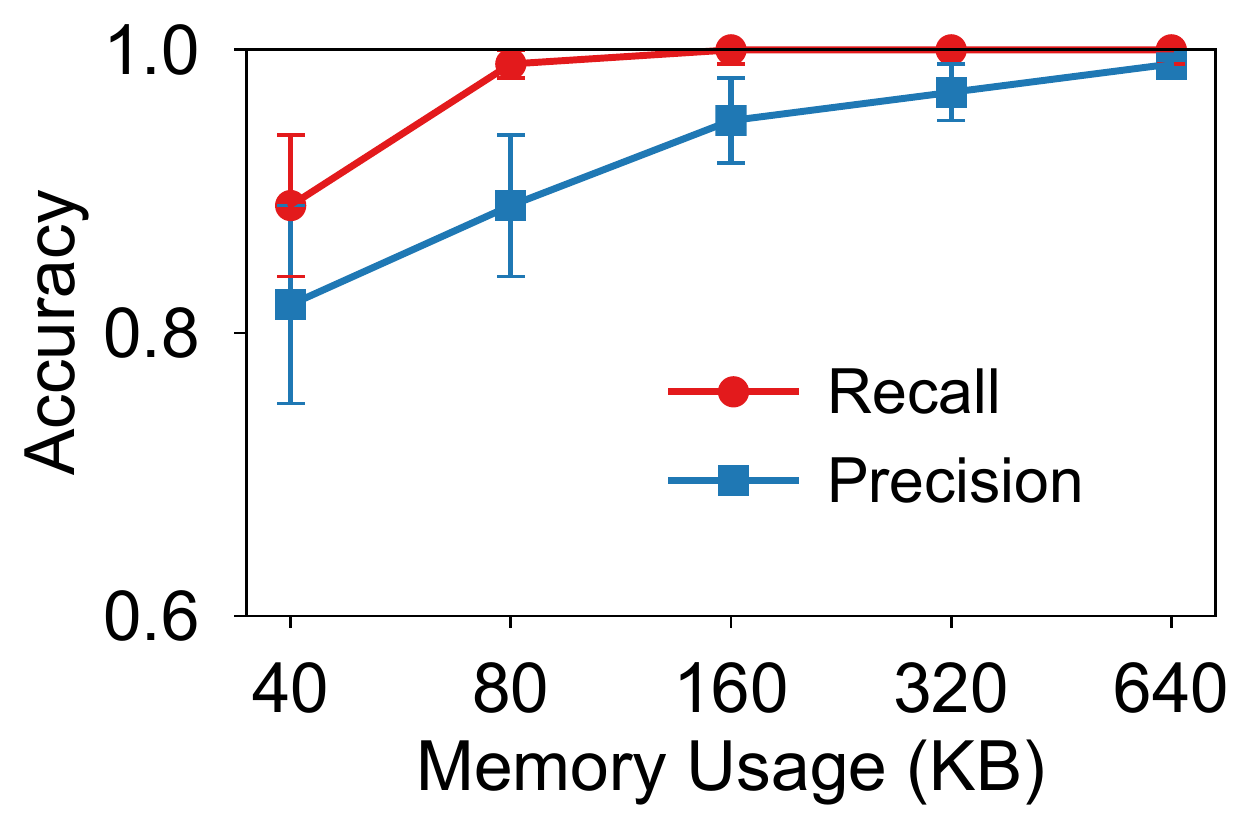}}
\subfigure[Packet Loss]{
\label{fig:packet_loss}
\includegraphics[width=0.237\textwidth]{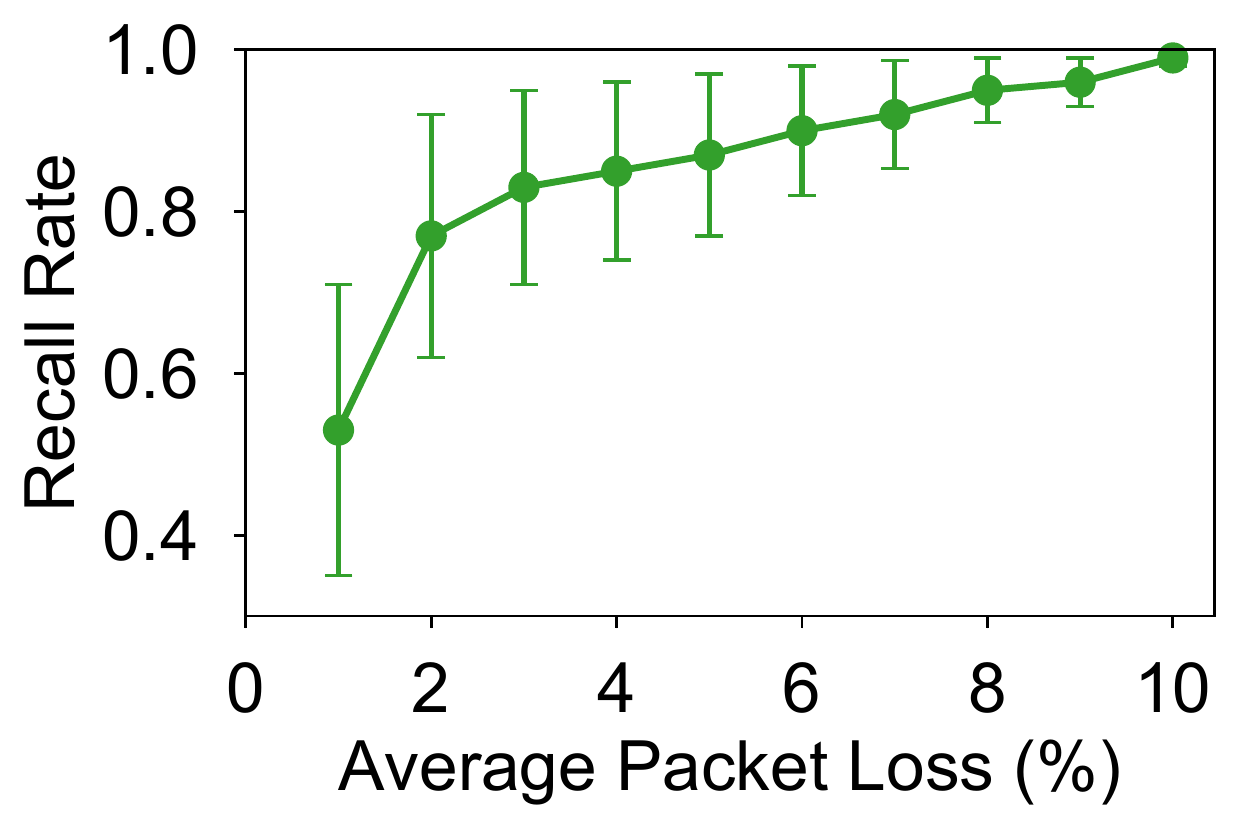}}
\subfigure[Packet Loss]{
\label{fig:packet_loss_tradeoff}
\includegraphics[width=0.237\textwidth]{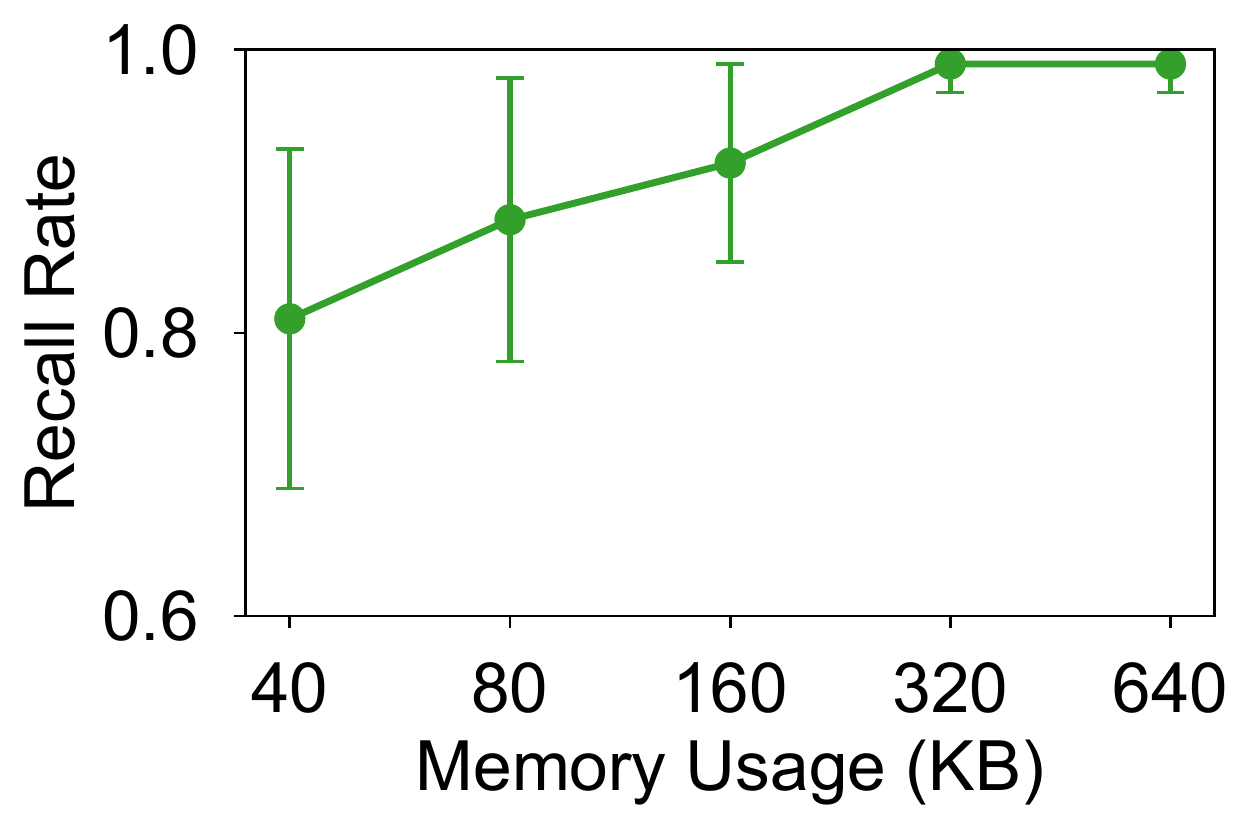}
}

\tightcaption{(a) Latency vs. accuracy (recall and precision) in detecting top 100 flows with various latency. (b) Memory vs. accuracy tradeoff in detecting top 100 flows with 50ms average latency. (c) Packet loss vs. recall in detecting top 100 flows with various packet losses. (d) Memory vs. recall tradeoff in detecting top 100 flows with 4\% packet loss.}
\end{figure*}


\begin{figure*}[t]
\centering
\subfigure[Out-of-order Packet]{
\label{fig:packet_outoforder}
\includegraphics[width=0.237\textwidth]{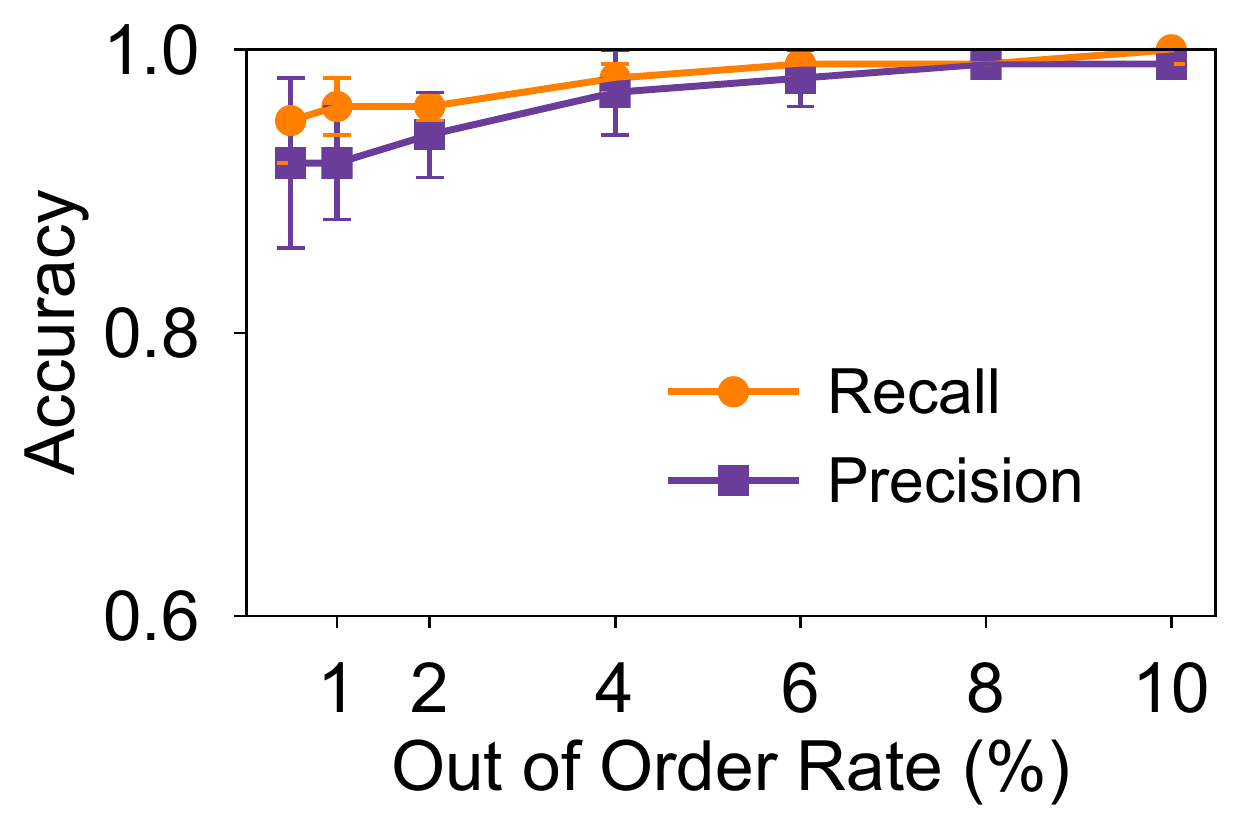}}
\subfigure[Out-of-order Packet]{
\label{fig:packet_outoforder_tradeoff}
\includegraphics[width=0.237\textwidth]{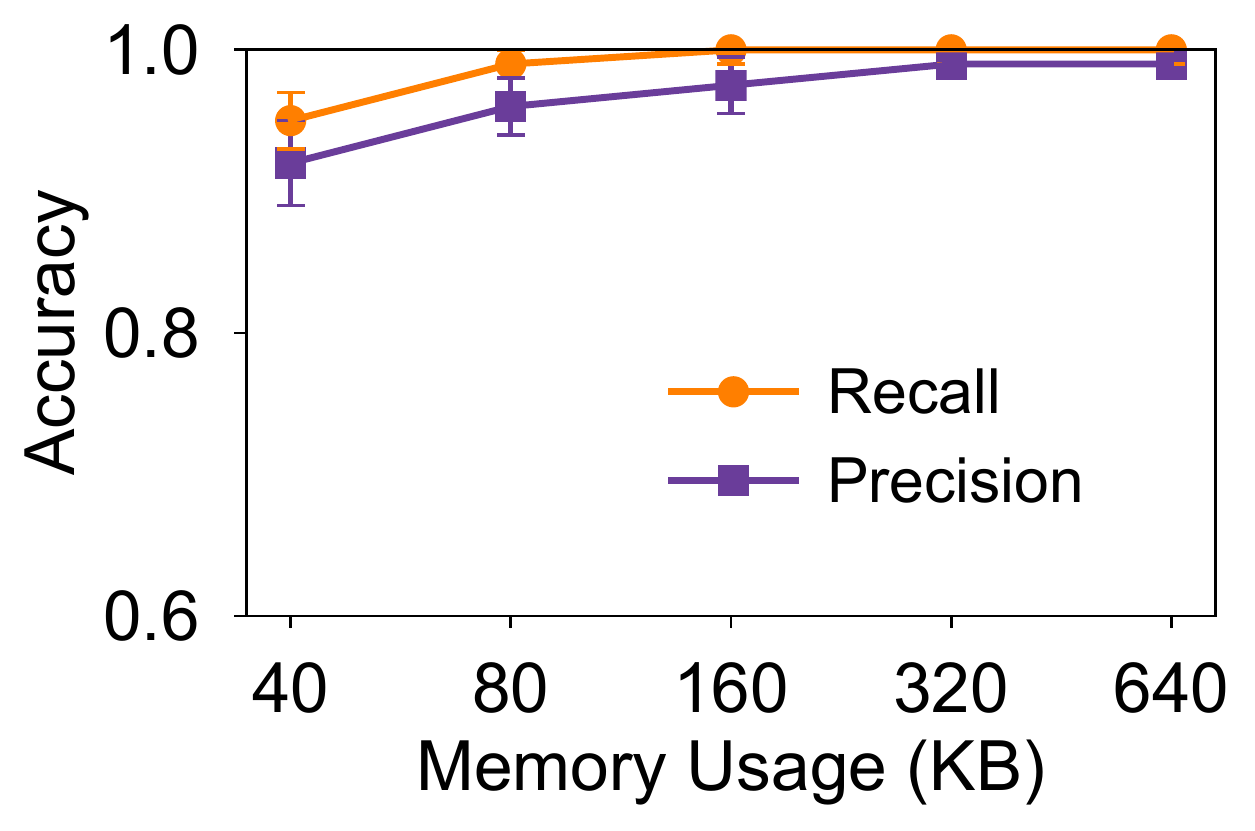}}
\subfigure[Retransmission]{
\label{fig:retransmission}
\includegraphics[width=0.237\textwidth]{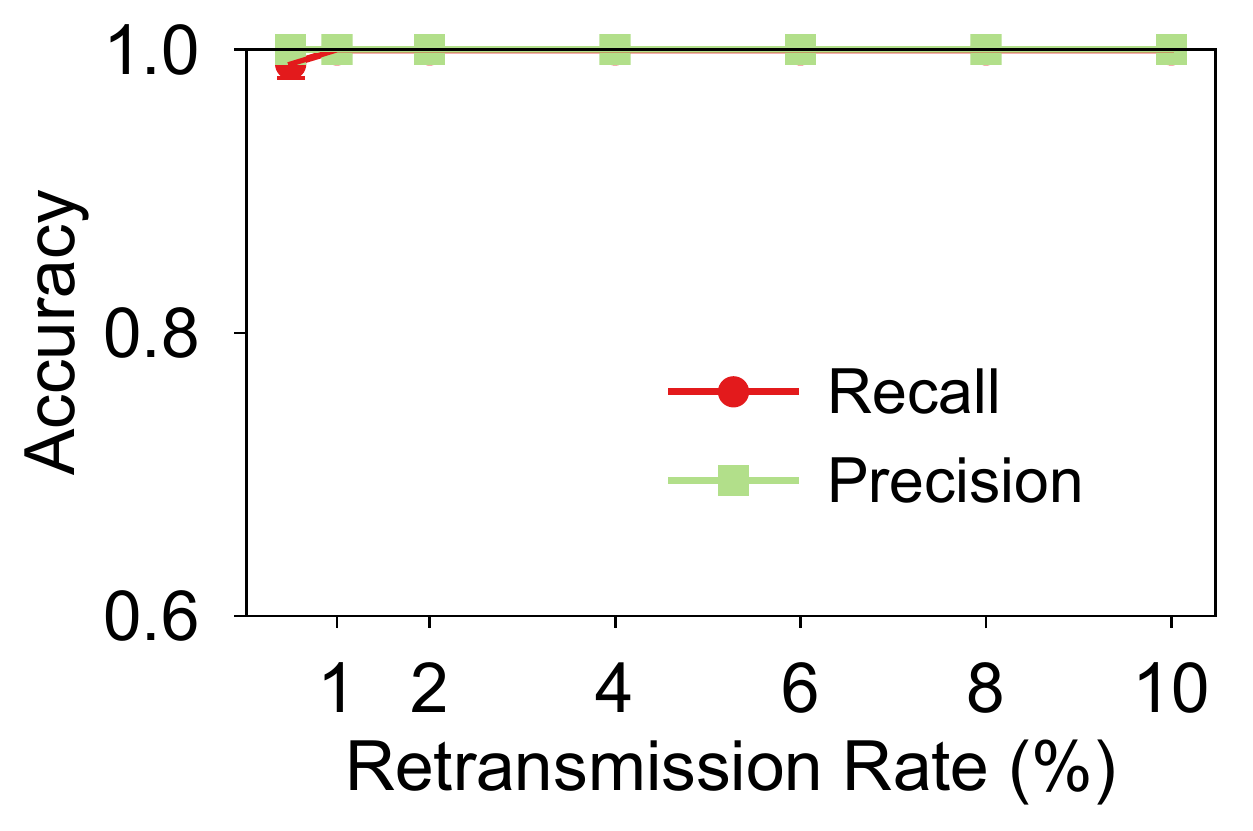}}
\subfigure[Retransmission]{
\label{fig:retransmission_tradeoff}
\includegraphics[width=0.237\textwidth]{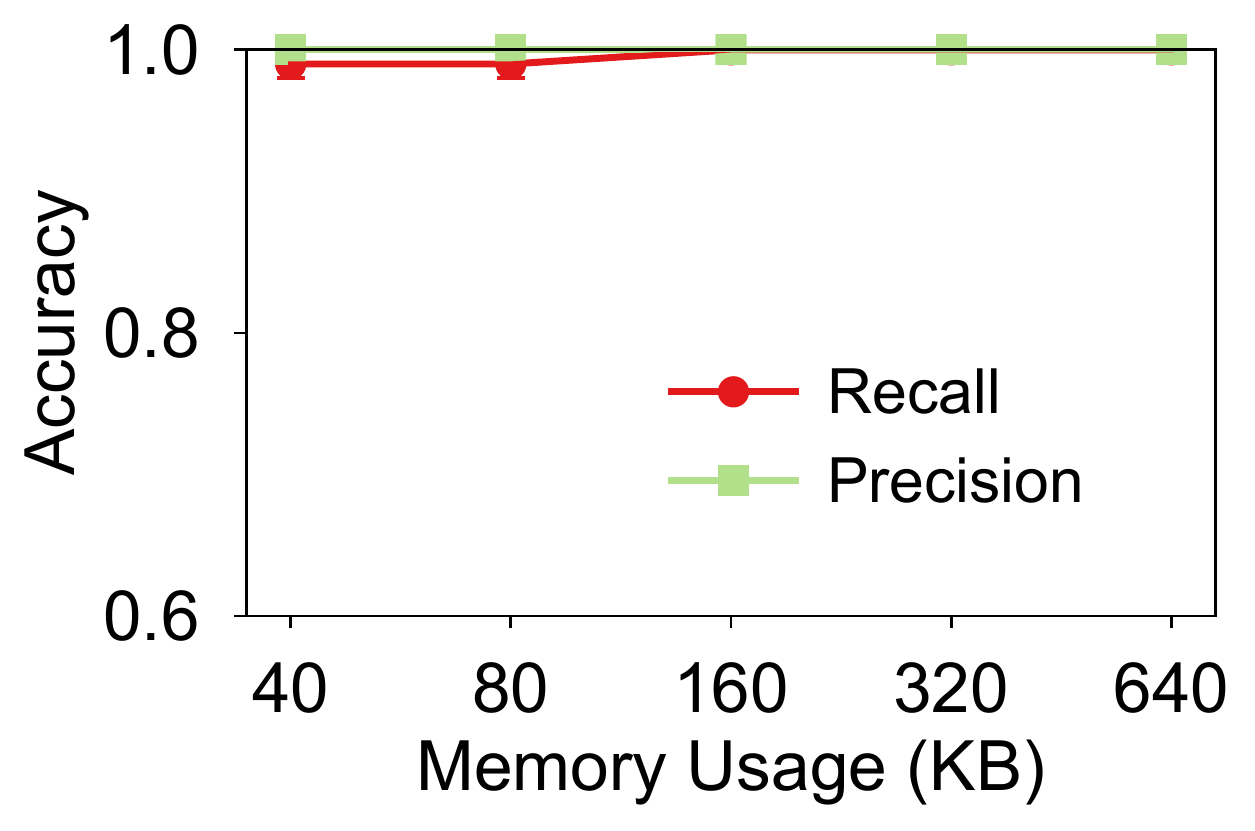}}

\tightcaption{(a) Out-of-order rate vs. accuracy (recall and precision) in detecting top 100 influential flows. (b) Memory vs. accuracy tradeoff in detecting top 100 flows with 4\% out-of-order rate. (c) Retransmission rate vs. accuracy in detecting top 100 flows. (d) Memory vs. accuracy tradeoff in detecting top 100 flows with 1\% retransmission rate.}
\end{figure*}

\section{Evaluation}\label{sec:eval}
In this section, we evaluate the performance of our algorithms using CAIDA's anonymous network traces from 2018~\cite{caida2018}. We show that our prototype can (1) report flows with high latency, high out-of-order packets, and high retransmission rates accurately, and achieve high recall rates when packet loss is significant; (2) present a tradeoff between memory vs.\ accuracy and offer reasonable accuracy even with small memory footprint; and (3) measure multiple statistics simultaneously under hardware resource limitations. We first briefly describe our methodology in Section~\ref{sec:eval:method} and show the results in the rest of the section. Other base-line solutions~\cite{snap,marple,dapper} provide $100\%$ accuracy but use $\O{N}$ space for $N$ flows.

\subsection{Methodology}\label{sec:eval:method}

We build our prototype with P4-14 on a 6.5Tbps Barefoot Tofino switch and set up the environment as shown in Figure~\ref{fig:eval_overview}. 
We use CAIDA anonymized traces collected from an OC-192 link between Sao Paulo and New York. The traces are divided into one-minute epochs, each of which consists of approximately 70 million packets. In our evaluation, we set our monitoring interval to 1-min, and each epoch has about approximately 3,700,000 unique flows. By default, we allocate 5 rows of 2000 32-bit counters (40KB) in our sketch.

\para{Setups:} As public packet traces do not have significant performance issues in their network, we create a synthetic packet trace from CAIDA traces: (a) To simulate the high-latency issue, we manually inject 30$\sim$90ms delays into 100 random heavy flows before sending to the switch with the packet sender. (2) To simulate high packet loss, we inject 1\%$-$10\% random packet loss to the randomly selected 100 flows out of the top 1000 heavy flows. (3) To simulate high out-of-order packets, we randomly sample some packets and delay them for 5ms in order to change the original packet order. Similarly for retransmitted packets, we randomly sampled packets to be duplicated.

We report the \emph{median} (and error bar) for ten independent runs.
In terms of accuracy, we consider the \emph{recall} and \emph{precision}, defined as $\frac{|ReturnedFlows \cap RelevantFlows| }{|RelevantFlows|}$ and $\frac{|EstimatedTrue|}{|True|}$. 

\subsection {Accuracy and Memory}
We first evaluate the accuracy of our Algorithms 1 to 4. 
In Algorithm~\ref{alg:rtt}, we detect the top 100 flows with the largest injected latency and report the recall rate. In Algorithm~\ref{alg:lost}, we track the top 100 flows with the largest injected random packet loss. Similarly, in Algorithms~\ref{alg:oopacket} and~\ref{alg:retransmissions}, we return the top 100 flows with a high number of out-of-order packets and retransmitted packets.

\para{Latency:} We track only TCP packets and replay the PCAP file with injected latency times for selected flows. As shown in Figure~\ref{fig:latency}, Algorithm 1 returns $>80\%$ of top 100 flows with 30ms average latency. When the latency is more significant, Algorithm 1 finds $\sim$98\% of top 100 high latency flows. If we get a slightly loose memory requirement, we can trade a small increase in memory for better accuracy. Figure~\ref{fig:latency_tradeoff} shows the recall rate of detecting top 100 flows with 50ms mean latency can reach $\sim$98\% when using 80KB memory.

\para{Packet loss:} As depicted in Figure~\ref{fig:packet_loss}, Algorithm 2 can detect the majority of the flows with the highest packet loss using 40KB memory. The recall rate increases significantly when the loss rate increases from low (1\%) to high (10\%).  In Figure~\ref{fig:packet_loss_tradeoff}, we show a similar memory-accuracy tradeoff as recall rate is improved by using more memory, and we can achieve $\sim$95\% recall rate when allocating 320KB memory.

\para{Out-of-order packets:} Using the injected real-world packet trace, Algorithm 3 can detect the flows with a high number of out-of-order packets. As shown in Figure~\ref{fig:packet_outoforder}, with 5 rows of 2000 counters our algorithm is able to detect the top 100 problematic flows among 3 million flows with $>95\%$ recall and $>92\%$ precision. When we increase the number of counters, there is a clear tradeoff between memory usage and accuracy, i.e., when using more than 80KB memory, $99\%$ of the flows with $4\%$ out-of-order rate have been returned. 

\para{Retransmission:} With a synthetic trace, we manually create duplicated packets with probabilities from 1\% to 10\%. Algorithm~\ref{alg:retransmissions} tracks elephant flows first and estimates the number of transmitted packets. The heavy-hitter algorithm detects the elephant flows with fairly perfect accuracy with even 40KB memory, as shown in Figures~\ref{fig:retransmission} and~\ref{fig:retransmission_tradeoff}. 

\begin{table}[t]
\centering
\footnotesize
\begin{tabular}{ lccccc }
\toprule
 \textbf{Switches} & Match Entries & Hash Bits & SRAMs & Action Slots\\
\midrule
Switch.p4 &  804 &  1678 &  293 & 503 \\
Our Prototype & 349 &  1051 & 31  & 98 \\
\bottomrule
\end{tabular}
\vspace{-4mm}\caption{Hardware resource usage on PISA switch.}
\label{tab:resource}
\end{table}

\subsection{Hardware Resources}\label{sec:eval:hardware}
Finally, we measure the resource usage of a hardware switch. The Protocol Independent Switch Architecture (PISA) we use allows developers to define their own packet formats and design the packet actions in a series of match-action tables. These tables are mapped into different stages in a sequential order, along with dedicated resources (e.g., match entries, hash bits, SRAMs, and action slots) for each stage. Our prototype leverages stateful memory to maintain the sketch data structure, and minimizes the resource usage. Table~\ref{tab:resource} shows small resource usage with Algorithms 1-4 combined, compared to a simple switch implementation provided by default (switch.p4).

\section{Related Work}\label{sec:related}
\para{End-host based monitoring tools:} With the full access to the end-host's network stack, existing work has tackled flow monitoring~\cite{tcpdump,tcpprobe}, event triggering~\cite{trumpet}, trace replay~\cite{deter,rack}, and performance monitoring and diagnosis ~\cite{lossradar,Confluo}. The advantage of using an end-host based approach is the accurate analysis of the network traffic. However, the deployment of such end-host based tools requires the control of the end-hosts, which largely limit the usage to private cloud environments. Recent work~\cite{SwitchPointer} also leverage switches to efficiently point to distributed end-host information for whole network visibility.

\para{Switch-based monitoring tools:} Since hardware switches have limited memory resources for monitoring tasks, memory-optimized sketching algorithms have been proposed to a variety of flow monitoring tasks, such as heavy hitters (frequent flows)~\cite{MG,SpaceSavings,revsketch,HashPipe}, detecting hierarchical heavy hitters~\cite{cormodeHHH,RHHH}, counting distinct flows~\cite{F0sketch,univmon}, estimating frequency moments~\cite{ams}, and change detection~\cite{k-ary,univmon}. These sketching algorithms offer worst-case guarantees to arbitrary network workloads and use sublinear memory in terms of the number of distinct network flows. 
On the other hand, virtual switches are emerging as an important measurement vantage point. Other than the traditional sampling-based approach~\cite{netflow,sflow}, recent work~\cite{hotnets_hashtable,nitro,SketchVisor,interval_query} has targeted on efficient data structures for flow monitoring in software switches. The main goal of these approaches is to achieve line-rate with accurate measurement results.
In contrast to past work on \emph{flow} monitoring, we propose sublinear data structures for \emph{performance} monitoring.

\section{Conclusion and Discussion}\label{sec:conclude}
In this paper, we propose  memory-efficient approaches for network performance monitoring. Our theoretical analysis and empirical evaluation demonstrate that our approaches achieve good accuracy with significant memory efficiency.  We conclude by highlighting a subset of new and exciting future work that this work opens up.

\para{Additional performance analytics:} 
Beyond tracking latency, packet loss, out-of-order, and retransmitted packets, we will consider more metrics, such as high delayed ACKs and low sending window. 
We have shown that there are memory lower bounds of $\Omega(N)$ for a workload with $N$ flows when deterministically obtaining some statistics in the model. 
Thus, to achieve lean memory efficiency, we need to identify if a particular performance monitoring task meets both flow-additive property and single flow sublinearity. If a performance function fails to meet either of these two properties, we need to further relax the problems with additional assumptions that are reasonable under practical networking scenarios.

\para{Hardware optimization:} When monitoring multiple statistics simultaneously on a hardware device, we would like to reduce the cost of maintaining multiple data structures. One simple way to reduce the hash bits (for other concurrent applications) is to store the hash value as metadata and reuse across sketches. We will further explore other probabilistic and succinct data structures specifically designed for performance monitoring in resource-constrained hardware.

\para{Possible universal data structure:} In this work, we define a computation model to describe the performance statistics collected from each flow and show the existence of lean algorithms. This unified model lights up a potential path to a ``universal sketch'' on all performance monitoring functions defined in the model. One attempt to build such a universal sketch can be maintaining a similar structure as UnivMon~\cite{univmon,liuhotnets15} or ElasticSketch~\cite{elastic} on each of the heavy flows.

\section{Acknowledgments}
We would like to thank the anonymous reviewers for their thorough comments and feedback that helped improve the paper. This work is supported in part by NSF grants CNS-1700521, CNS-1565343, CCF-1535948, CNS-1813487, NSF CAREER-1652257, Intel Labs University Research Office, ONR Award N00014-18-1-2364, the Lifelong Learning Machines program from DARPA/MTO, the Technion Hiroshi Fujiwara Cyber Security Research Center, and the Israel National Cyber Directorate.


\bibliographystyle{ieeetr}
\bibliography{alan}

\end{document}